\newcommand{\cmark}{\checkmark}
\tikzstyle{vertex}=[circle, draw, inner sep=0pt, minimum size=4pt]
\newcommand{\vertex}{\node[vertex]}
\DeclareTextSymbolDefault{\DH}{T1} 
\definecolor{redback}{RGB}{255,168,168}
\definecolor{greenback}{RGB}{168,255,168}
\definecolor{yellowback}{RGB}{255,255,168}
\newtheorem{thm}{Theorem} 
\newtheorem{lemma}[thm]{Lemma}
\newtheorem{prop}[thm]{Proposition}
\newcommand{\ket}[1]{| #1 \rangle}
\newcommand{\braket}[2]{\langle #1 | #2 \rangle}
\newcommand{\ketbra}[2]{| #1 \rangle\langle #2 |}
\newcommand{\bb}[1]{\mathbb{#1}}
\newcommand{\cl}[1]{\mathcal{#1}}
\renewcommand*\env@matrix[1][c]{\hskip -\arraycolsep
  \let\@ifnextchar\new@ifnextchar
  \array{*\c@MaxMatrixCols #1}}
\begin{document}

\begin{frontmatter}


\title{The Structure of Qubit Unextendible Product Bases}

\author[IQC]{Nathaniel Johnston}
\ead{nathaniel.johnston@uwaterloo.ca}

\address[IQC]{Institute for Quantum Computing, University of Waterloo, Waterloo, Ontario N2L~3G1, Canada}

\begin{abstract}
	Unextendible product bases have been shown to have many important uses in quantum information theory, particularly in the qubit case. However, very little is known about their mathematical structure beyond three qubits. We present several new results about qubit unextendible product bases, including a complete characterization of all four-qubit unextendible product bases, which we show there are exactly 1446 of. We also show that there exist $p$-qubit UPBs of almost all sizes less than $2^p$.
\end{abstract}

\begin{keyword}
unextendible product basis \sep quantum entanglement \sep graph factorization

\MSC 81P40 \sep 05C90 \sep 81Q30

\end{keyword}

\end{frontmatter}

\section{Introduction}

Unextendible product bases (UPBs) are one of the most useful and versatile objects in the theory of quantum entanglement. While they were originally introduced as a tool for constructing bound entangled states \cite{BDFMRSSW99,DMSST03}, they can also be used to construct indecomposible positive maps \cite{Ter01} and to demonstrate the existence of nonlocality without entanglement---that is, they can not be perfectly distinguished by local quantum operations and classical communication, even though they contain no entanglement. Furthermore, in the qubit case (i.e., the case where each local space has dimension $2$), unextendible product bases can be used to construct tight Bell inequalities with no quantum violation \cite{AFKKPLA12,ASHKLA11} and subspaces of small dimension that are locally indistinguishable \cite{DXY10}.

Despite their many uses, very little is known about the mathematical structure of unextendible product bases. For example, UPBs have only been completely characterized in $\mathbb{C}^2 \otimes \mathbb{C}^n$ (where all UPBs are trivial in the sense that they span the entire space \cite{BDMSST99}), $\mathbb{C}^3 \otimes \mathbb{C}^3$ (where all UPBs belong to a known six-parameter family \cite{DMSST03}), and $\mathbb{C}^2 \otimes \mathbb{C}^2 \otimes \mathbb{C}^2$ (where there is only one nontrivial UPB up to local operations \cite{Bra04}). The goal of the present paper is to thoroughly investigate the structure of qubit unextendible product bases (i.e., UPBs in $(\mathbb{C}^2)^{\otimes p}$ for some $p \in \mathbb{N}$).

Our first contribution is to completely characterize all unextendible product bases on four qubits. Unlike the three qubit case, where all nontrivial UPBs are essentially the same (in the sense that they all have the same orthogonality graph), we show that nontrivial UPBs on four qubits can have one of exactly 1446 different orthogonality graphs, and hence the set of qubit UPBs quickly becomes very complicated as the number of qubits increases.

We also consider UPBs on larger numbers of qubits. In particular, we address the question of how many states a $p$-qubit UPB can have. The minimum number of states in such a UPB is known to always be between $p+1$ and $p+4$ inclusive \cite{Joh13UPB}, and the results of \cite{CD13} immediately imply that the maximum number of states is $2^p-4$ (or $2^p$ if we allow trivial UPBs that span the entire $2^p$-dimensional space). However, very little has been known about what intermediate sizes can be attained as the cardinality of some $p$-qubit UPB.

Surprisingly, we show that there are intermediate sizes that are \emph{not} attainable as the cardinality of any $p$-qubit UPB (contrast this with the case of non-orthogonal UPBs, which exist of any size from $p+1$ to $2^p$ inclusive \cite{Bha06}). However, we show that these cases are rare in the sense that, as $p \rightarrow \infty$, the proportion of intermediate sizes that are attainable by some $p$-qubit UPB goes to $1$. Furthermore, we show that all unattainable sizes are very close to either the minimal or maximal size, and we provide examples to demonstrate that both of these cases are possible.

The paper is organized as follows. In Section~\ref{section:prelims} we introduce some basic facts about UPBs that will be of use for us, and present the mathematical tools that we will use to prove our results. We then describe our characterization of four qubit UPBs in Section~\ref{section:4qubit}, which was found via computer search (described in Appendix~A). We also discuss some new UPBs on five and six qubits that were found via the same computer search in Section~\ref{section:5qubit}. Finally, we consider the many-qubit case in Section~\ref{section:manyqubit}, where we show that there exist qubit UPBs of most (but not all) sizes between the minimal and maximal size.

\section{Preliminaries}\label{section:prelims}

A $p$-qubit pure quantum state is represented by a unit vector $\ket{v} \in (\bb{C}^{2})^{\otimes p}$, which is called a \emph{product state} if it can be decomposed in the following form:
\begin{align*}
	\ket{v} = \ket{v_1} \otimes \cdots \otimes \ket{v_p} \ \ \text{ with } \ \ \ket{v_j} \in \bb{C}^{2} \ \forall \, j.
\end{align*}
The standard basis of $\mathbb{C}^2$ is $\{\ket{0},\ket{1}\}$ and we use $\{\ket{a},\ket{\overline{a}}\}$, $\{\ket{b},\ket{\overline{b}}\}$, $\{\ket{c},\ket{\overline{c}}\}, \ldots$ to denote orthonormal bases of $\mathbb{C}^2$ that are different from $\{\ket{0},\ket{1}\}$ and from each other (i.e., $\ket{a} \neq \ket{0},\ket{1},\ket{b},\ket{\overline{b}},\ket{c},\ket{\overline{c}}$, and so on). We also will sometimes find it useful to omit the tensor product symbol when discussing multi-qubit states. For example, we use $\ket{0a11\overline{a}}$ as a shorthand way to write $\ket{0}\otimes\ket{a}\otimes\ket{1}\otimes\ket{1}\otimes\ket{\overline{a}}$.

A $p$-qubit \emph{unextendible product basis (UPB)} \cite{BDMSST99,DMSST03} is a set $\mathcal{S} \subseteq (\bb{C}^{2})^{\otimes p}$ satisfying the following three properties:
\begin{enumerate}[(a)]
	\item every $\ket{v} \in \mathcal{S}$ is a product state;
	
	\item $\braket{v}{w} = 0$ for all $\ket{v} \neq \ket{w} \in \mathcal{S}$; and
	
	\item for all product states $\ket{z} \notin \mathcal{S}$, there exists $\ket{v} \in \mathcal{S}$ such that $\braket{v}{z} \neq 0$.
\end{enumerate}
That is, a UPB is a set of mutually orthogonal product states such that there is no product state orthogonal to every member of the set. To be explicit, when we refer to the ``size'' of a UPB, we mean the number of states in the set.

We now present a result that shows how to use known UPBs to construct larger UPBs on more qubits. This result is well-known and follows easily from \cite[Lemma~2.3]{Fen06}, but we make repeated use of it and thus prove it explicitly.
\begin{prop}\label{prop:qubit_add_together}
	Let $p \in \mathbb{N}$. If there exist $p$-qubit UPBs $\mathcal{S}_1$ and $\mathcal{S}_2$ with $|\mathcal{S}_1| = s_1$ and $|\mathcal{S}_2| = s_2$ then there exists a $(p+1)$-qubit UPB $\mathcal{S}$ with $|\mathcal{S}| = s_1 + s_2$.
\end{prop}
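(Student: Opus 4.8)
The plan is to build the $(p+1)$-qubit UPB by appending an extra qubit that distinguishes the two given bases. Concretely, I would set
\[
  \mathcal{S} = \big\{ \ket{v}\otimes\ket{0} : \ket{v}\in\mathcal{S}_1 \big\} \cup \big\{ \ket{w}\otimes\ket{1} : \ket{w}\in\mathcal{S}_2 \big\},
\]
so that every state of $\mathcal{S}_1$ carries a $\ket{0}$ on the new qubit and every state of $\mathcal{S}_2$ carries a $\ket{1}$. Since the two families disagree on the last qubit they are automatically disjoint, giving $|\mathcal{S}| = s_1 + s_2$, and each element is manifestly a product state, which is property~(a).

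For property~(b) I would check orthogonality in three cases. Two states coming from the same $\mathcal{S}_i$ are orthogonal because their first $p$ qubits already are (using that $\mathcal{S}_i$ is a UPB), while a state from $\mathcal{S}_1$ and a state from $\mathcal{S}_2$ are orthogonal because $\braket{0}{1}=0$ on the appended qubit. All three cases are immediate.

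The substance of the argument is property~(c), unextendibility. Here I would take an arbitrary $(p+1)$-qubit product state $\ket{z} = \ket{y}\otimes\ket{z_{p+1}}$, where $\ket{y}$ is a $p$-qubit product state, suppose it is orthogonal to every element of $\mathcal{S}$, and derive a contradiction. The key observation is that the single-qubit vector $\ket{z_{p+1}}$ cannot be orthogonal to both $\ket{0}$ and $\ket{1}$, so at least one of $\braket{0}{z_{p+1}}$ and $\braket{1}{z_{p+1}}$ is nonzero. If $\braket{0}{z_{p+1}}\neq 0$, then orthogonality of $\ket{z}$ to every $\ket{v}\otimes\ket{0}$ forces $\braket{v}{y}=0$ for all $\ket{v}\in\mathcal{S}_1$, exhibiting a $p$-qubit product state orthogonal to the whole UPB $\mathcal{S}_1$ --- a contradiction. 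The case $\braket{1}{z_{p+1}}\neq 0$ is symmetric and contradicts unextendibility of $\mathcal{S}_2$.

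I expect the only real subtlety to be this last case split: one must notice that whichever component of $\ket{z_{p+1}}$ survives selects which of the two given UPBs is violated, and that the surviving inner-product factor can be divided out so that the contradiction lands on the $p$-qubit product state $\ket{y}$ rather than on $\ket{z}$ itself. Everything else --- disjointness, product structure, and the three orthogonality cases --- is routine.
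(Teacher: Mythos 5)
Your proposal is correct and follows essentially the same route as the paper: the same construction $\mathcal{S}_1\otimes\ket{0}\cup\mathcal{S}_2\otimes\ket{1}$, and the same unextendibility argument via the factorization $\braket{v}{y}\braket{0}{z_{p+1}}=0$ (the paper phrases the case split contrapositively, concluding $\braket{0}{z_{p+1}}=\braket{1}{z_{p+1}}=0$ and hence $\ket{z_{p+1}}=0$, but this is the same inference). No gaps.
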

\begin{proof}
	If we write $\mathcal{S}_1 = \{\ket{v_1},\ldots,\ket{v_{s_1}}\}$ and $\mathcal{S}_2 = \{\ket{w_1},\ldots,\ket{w_{s_2}}\}$ then it is straightforward to see that the set
	\begin{align*}
		\mathcal{S} := \big\{ \ket{v_1}\otimes\ket{0}, \ldots, \ket{v_{s_1}}\otimes\ket{0}, \ket{w_1}\otimes\ket{1}, \ldots, \ket{w_{s_2}}\otimes\ket{1} \big\} \subset (\mathbb{C}^2)^{\otimes(p+1)}
	\end{align*}
	satisfies properties~(a) and~(b) of a UPB. We prove that it also satisfies property~(c) by contradiction: suppose that there were a product state $\ket{z} \in (\mathbb{C}^2)^{\otimes(p+1)}$ such that $\braket{v}{z} = 0$ for all $\ket{v} \in \mathcal{S}$. If we write $\ket{z} = \ket{z_{1\ldots p}} \otimes \ket{z_{p+1}}$ for some product state $\ket{z_{1\ldots p}} \in (\mathbb{C}^2)^{\otimes p}$ and $\ket{z_{p+1}} \in \mathbb{C}^2$ then we have $\braket{v_j}{z_{1\ldots p}}\braket{0}{z_{p+1}} = 0$ for all $1 \leq j \leq s_1$. Unextendibility of $\mathcal{S}_1$ implies that $\braket{0}{z_{p+1}} = 0$. However, a similar argument using unextendibility of $\mathcal{S}_2$ shows that $\braket{1}{z_{p+1}} = 0$, which implies that $\ket{z_{p+1}} = 0$, which is the contradiction that completes the proof.
\end{proof}

\subsection{Orthogonality Graphs}\label{section:orthog_graphs}

One tool that we find helps to visualize UPBs and simplify proofs is an orthogonality graph. Given a set of product states $\cl{S} = \{\ket{v_1},\ldots,\ket{v_{s}}\} \subseteq (\bb{C}^2)^{\otimes p}$ with $|\cl{S}| = s$, the \emph{orthogonality graph of $\cl{S}$} is the graph on $s$ vertices $V := \{v_1,\ldots,v_{s}\}$ such that there is an edge $(v_i,v_j)$ of color $\ell$ if and only if $\ket{v_i}$ and $\ket{v_j}$ are orthogonal to each other on qubit $\ell$. Rather than actually using $p$ colors to color the edges of the orthogonality graph, for ease of visualization we instead draw $p$ different graphs on the same set of vertices---one for each qubit (see Figure~\ref{fig:2dim_eg}).

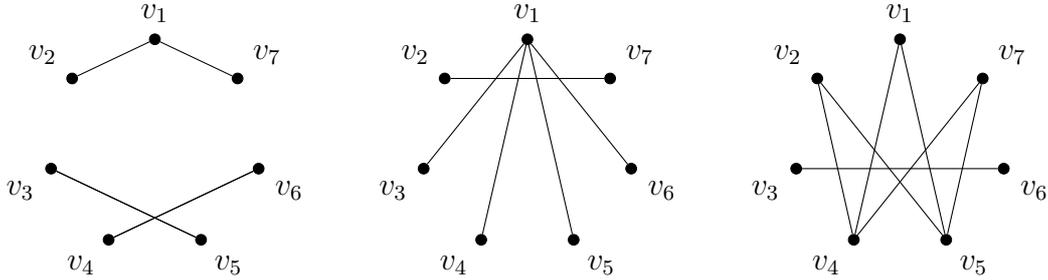
\begin{figure}[htb]
	\centering
	\begin{tikzpicture}[x=1.4cm, y=1.4cm, label distance=0cm]     
		\vertex[fill] (v00) at (0,1) [label=90:$v_{1}$]{};
		\vertex[fill] (v01) at (-0.777,0.629) [label=141:$v_{2}$]{};
		\vertex[fill] (v02) at (-0.975,-0.225) [label=193:$v_{3}$]{};
		\vertex[fill] (v03) at (-0.434,-0.899) [label=244:$v_{4}$]{};
		\vertex[fill] (v04) at (0.434,-0.899) [label=296:$v_{5}$]{};
		\vertex[fill] (v05) at (0.975,-0.225) [label=347:$v_{6}$]{};
		\vertex[fill] (v06) at (0.777,0.629) [label=39:$v_{7}$]{};
				
		\vertex[fill] (v10) at (3.5,1) [label=90:$v_{1}$]{};
		\vertex[fill] (v11) at (2.723,0.629) [label=141:$v_{2}$]{};
		\vertex[fill] (v12) at (2.525,-0.225) [label=193:$v_{3}$]{};
		\vertex[fill] (v13) at (3.066,-0.899) [label=244:$v_{4}$]{};
		\vertex[fill] (v14) at (3.934,-0.899) [label=296:$v_{5}$]{};
		\vertex[fill] (v15) at (4.475,-0.225) [label=347:$v_{6}$]{};
		\vertex[fill] (v16) at (4.277,0.629) [label=39:$v_{7}$]{};

		\vertex[fill] (v20) at (7,1) [label=90:$v_{1}$]{};
		\vertex[fill] (v21) at (6.223,0.629) [label=141:$v_{2}$]{};
		\vertex[fill] (v22) at (6.025,-0.225) [label=193:$v_{3}$]{};
		\vertex[fill] (v23) at (6.566,-0.899) [label=244:$v_{4}$]{};
		\vertex[fill] (v24) at (7.434,-0.899) [label=296:$v_{5}$]{};
		\vertex[fill] (v25) at (7.975,-0.225) [label=347:$v_{6}$]{};
		\vertex[fill] (v26) at (7.777,0.629) [label=39:$v_{7}$]{};

		\path 
			(v00) edge (v01)
			(v00) edge (v06)
			(v02) edge (v04)
			(v02) edge (v04)
			(v03) edge (v05)
			(v03) edge (v05)
			
			(v11) edge (v16)
			(v10) edge (v12)
			(v10) edge (v13)
			(v10) edge (v14)
			(v10) edge (v15)

			(v22) edge (v25)
			(v20) edge (v23)
			(v20) edge (v24)
			(v21) edge (v23)
			(v21) edge (v24)
			(v26) edge (v23)
			(v26) edge (v24)
		;
	\end{tikzpicture}
	\caption{The orthogonality graph of a set of $7$ product states in $(\bb{C}^2)^{\otimes 3}$. The left graph indicates that state $\ket{v_1}$ is orthogonal to $\ket{v_2}$ and $\ket{v_7}$ on the first qubit, $\ket{v_3}$ is orthogonal to $\ket{v_5}$ on the first qubit, and $\ket{v_4}$ is orthogonal to $\ket{v_6}$ on the first qubit (and the middle and right graphs similarly describe the orthogonalities on the second and third qubits). These states do not form a UPB, since (for example) there is no edge between $v_2$ and $v_3$ in any of the graphs, so the states $\ket{v_2}$ and $\ket{v_3}$ are not orthogonal.}\label{fig:2dim_eg}
\end{figure}

The requirement~(b) that the members of a UPB are mutually orthogonal is equivalent to requiring that every edge is present on at least one qubit in its orthogonality graph (in other words, the orthogonality graph is an edge coloring of the complete graph). The unextendibility condition~(c) is more difficult to check, so we first need to make some additional observations. In particular, it is important to notice that if $\ket{z_1},\ket{z_2},\ket{z_3} \in \bb{C}^2$ satisfy $\braket{z_1}{z_2} = \braket{z_1}{z_3} = 0$, then it is necessarily the case that $\ket{z_2} = \ket{z_3}$ (up to an irrelevant scalar multiple). There are two important consequences of this observation:
\begin{enumerate}
	\item The orthogonality graph associated with any individual qubit in a product basis of $(\mathbb{C}^2)^{\otimes p}$ is the disjoint union of complete bipartite graphs. For example, the orthogonality graph of the first qubit in Figure~\ref{fig:2dim_eg} is the disjoint union of $K_{2,1}$ and two copies of $K_{1,1}$, the orthogonality graph of the second qubit is the disjoint union of $K_{4,1}$ and $K_{1,1}$, and the orthogonality graph of the third qubit is the disjoint union of $K_{3,2}$ and $K_{1,1}$.

	\item We can determine whether or not a set of qubit product states forms a UPB entirely from its orthogonality graph (a fact that is not true when the local dimensions are larger than $2$ \cite{DMSST03}). For this reason, we consider two qubit UPBs to be \emph{equivalent} if they have the same orthogonality graphs up to permuting the qubits and relabeling the vertices (alternatively, we consider two qubit UPBs to be equivalent if we can permute qubits and change each basis of $\mathbb{C}^2$ used in the construction of one of the UPBs to get the other UPB).
\end{enumerate}

Following \cite{Joh13UPB}, we sometimes draw orthogonality graphs in a form that makes their decomposition in terms of complete bipartite graphs more transparent---we draw shaded regions indicating which states are equal to each other (up to scalar multiple) on the given qubit, and lines between shaded regions indicate that all states in one of the regions are orthogonal to all states in the other region on that qubit (see Figure~\ref{fig:2dim_eg_compact}).
\begin{figure}[htb]
	\centering
	\begin{tikzpicture}[x=1.4cm, y=1.4cm, label distance=0.07cm]
    \draw[draw=black] (0,0.5) -- (0,1);

    \draw[line width=0.02cm,draw=black,fill=lightgray] (0,1) circle (0.2cm);
    \filldraw[fill=lightgray,line width=0.44cm,line join=round,draw=black] (-0.777,0.629) -- (-0.1,0.5) -- (0.1,0.5) -- (0.777,0.629) -- (0.1,0.5) -- (-0.1,0.5) -- cycle;
    \filldraw[fill=lightgray,line width=0.4cm,line join=round,draw=lightgray] (-0.777,0.629) -- (-0.1,0.5) -- (0.1,0.5) -- (0.777,0.629) -- (0.1,0.5) -- (-0.1,0.5) -- cycle;

    \draw[draw=black] (-0.975,-0.225) -- (0.434,-0.899);
    \draw[draw=black] (-0.434,-0.899) -- (0.975,-0.225);
    
    \draw[line width=0.02cm,draw=black,fill=lightgray] (-0.975,-0.225) circle (0.2cm);
    \draw[line width=0.02cm,draw=black,fill=lightgray] (-0.434,-0.899) circle (0.2cm);
    \draw[line width=0.02cm,draw=black,fill=lightgray] (0.434,-0.899) circle (0.2cm);
    \draw[line width=0.02cm,draw=black,fill=lightgray] (0.975,-0.225) circle (0.2cm);
    
    \draw[draw=black] (2.723,0.629) -- (4.277,0.629);
    \draw[draw=black] (3.5,1) -- (3.5,-0.7);

    \filldraw[fill=lightgray,line width=0.44cm,line join=round,draw=black] (2.525,-0.225) -- (3.066,-0.899) -- (3.934,-0.899) -- (4.475,-0.225) -- cycle;
    \filldraw[fill=lightgray,line width=0.4cm,line join=round,draw=lightgray] (2.525,-0.225) -- (3.066,-0.899) -- (3.934,-0.899) -- (4.475,-0.225) -- cycle;

    \draw[line width=0.02cm,draw=black,fill=lightgray] (3.5,1) circle (0.2cm);
    \draw[line width=0.02cm,draw=black,fill=lightgray] (2.723,0.629) circle (0.2cm);
    \draw[line width=0.02cm,draw=black,fill=lightgray] (4.277,0.629) circle (0.2cm);

    \draw[draw=black] (6.025,-0.225) -- (7.975,-0.225);
    \draw[draw=black] (7,1) -- (7,-0.899);

    \draw[line width=0.02cm,draw=black,fill=lightgray] (6.025,-0.225) circle (0.2cm);
    \draw[line width=0.02cm,draw=black,fill=lightgray] (7.975,-0.225) circle (0.2cm);

    \filldraw[fill=lightgray,line width=0.44cm,line join=round,draw=black] (7,1) -- (6.223,0.629) -- (7.777,0.629) -- cycle;
    \filldraw[fill=lightgray,line width=0.4cm,line join=round,draw=lightgray] (7,1) -- (6.223,0.629) -- (7.777,0.629) -- cycle;

    \filldraw[fill=lightgray,line width=0.44cm,line join=round,draw=black] (6.566,-0.899) -- (7.434,-0.899) -- (6.566,-0.899) -- cycle;
    \filldraw[fill=lightgray,line width=0.4cm,line join=round,draw=lightgray] (6.566,-0.899) -- (7.434,-0.899) -- (6.566,-0.899) -- cycle;
    
    \vertex[fill] (v00) at (0,1) [label=90:$v_{1}$]{};
		\vertex[fill] (v01) at (-0.777,0.629) [label=141:$v_{2}$]{};
		\vertex[fill] (v02) at (-0.975,-0.225) [label=193:$v_{3}$]{};
		\vertex[fill] (v03) at (-0.434,-0.899) [label=244:$v_{4}$]{};
		\vertex[fill] (v04) at (0.434,-0.899) [label=296:$v_{5}$]{};
		\vertex[fill] (v05) at (0.975,-0.225) [label=347:$v_{6}$]{};
		\vertex[fill] (v06) at (0.777,0.629) [label=39:$v_{7}$]{};
				
		\vertex[fill] (v10) at (3.5,1) [label=90:$v_{1}$]{};
		\vertex[fill] (v11) at (2.723,0.629) [label=141:$v_{2}$]{};
		\vertex[fill] (v12) at (2.525,-0.225) [label=193:$v_{3}$]{};
		\vertex[fill] (v13) at (3.066,-0.899) [label=244:$v_{4}$]{};
		\vertex[fill] (v14) at (3.934,-0.899) [label=296:$v_{5}$]{};
		\vertex[fill] (v15) at (4.475,-0.225) [label=347:$v_{6}$]{};
		\vertex[fill] (v16) at (4.277,0.629) [label=39:$v_{7}$]{};

		\vertex[fill] (v20) at (7,1) [label=90:$v_{1}$]{};
		\vertex[fill] (v21) at (6.223,0.629) [label=141:$v_{2}$]{};
		\vertex[fill] (v22) at (6.025,-0.225) [label=193:$v_{3}$]{};
		\vertex[fill] (v23) at (6.566,-0.899) [label=244:$v_{4}$]{};
		\vertex[fill] (v24) at (7.434,-0.899) [label=296:$v_{5}$]{};
		\vertex[fill] (v25) at (7.975,-0.225) [label=347:$v_{6}$]{};
		\vertex[fill] (v26) at (7.777,0.629) [label=39:$v_{7}$]{};
	\end{tikzpicture}
	\caption{A representation of the same orthogonality graph as that of Figure~\ref{fig:2dim_eg}. Vertices within the same shaded region represent states that are equal to each other on that qubit. Lines between shaded regions indicate that every state within one of the regions is orthogonal to every state within the other region on that qubit.}\label{fig:2dim_eg_compact}
\end{figure}
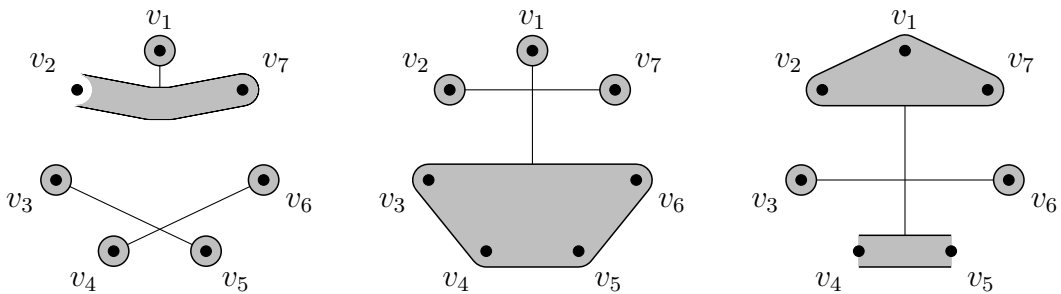

\subsection{UPBs on Three or Fewer Qubits}\label{section:fewqubits}

We now review what is known about UPBs in the space $(\mathbb{C}^2)^{\otimes p}$ when $1 \leq p \leq 3$. It is well-known that there are no nontrivial qubit UPBs when $p \leq 2$ \cite{BDMSST99}, so the first case of interest is when $p = 3$. In this case, the \textbf{Shifts} UPB \cite{BDMSST99} provides one of the oldest examples of a nontrivial UPB and consists of the following four states:
\begin{align*}
	\mathbf{Shifts} := \big\{ \ket{000}, \ket{1{+}{-}}, \ket{{-}1{+}}, \ket{{+}{-}1} \big\},
\end{align*}
where $\ket{+} := (\ket{0}+\ket{1})/\sqrt{2}$ and $\ket{-} := (\ket{0}-\ket{1})/\sqrt{2}$.

More interesting is the fact that \textbf{Shifts} is essentially the only nontrivial $3$-qubit UPB in the sense that every UPB in $(\mathbb{C}^2)^{\otimes 3}$ either spans the entire $8$-dimensional space or is equal to \textbf{Shifts} up to permuting the qubits and changing the bases used on each qubit \cite{Bra04} (i.e., replacing the basis $\{\ket{0},\ket{1}\}$ with another basis $\{\ket{a},\ket{\overline{a}}\}$ on any or all qubits, and similarly replacing $\{\ket{+},\ket{-}\}$ by another basis $\{\ket{b},\ket{\overline{b}}\}$ on any or all qubits). In other words, all nontrivial UPBs on $3$ qubits have the same othogonality graph, depicted in Figure~\ref{fig:shifts_og}.
	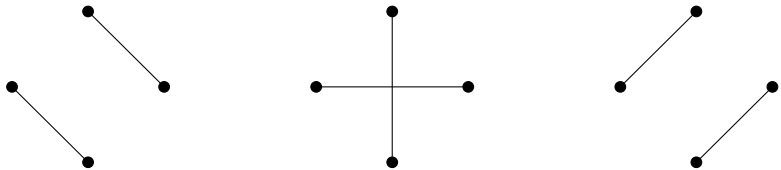
\begin{figure}[htb]
		\centering
		\begin{tikzpicture}[x=1cm, y=1cm, label distance=0cm]
		    \draw[draw=black] (0,1) -- (1,0);
		    \draw[draw=black] (0,-1) -- (-1,0);
	
			\vertex[fill] (v00) at (0,1) []{};
			\vertex[fill] (v01) at (1,0) []{};
			\vertex[fill] (v02) at (0,-1) []{};
			\vertex[fill] (v03) at (-1,0) []{};

		    \draw[draw=black] (4,1) -- (4,-1);
		    \draw[draw=black] (3,0) -- (5,0);
	
			\vertex[fill] (v00) at (4,1) []{};
			\vertex[fill] (v01) at (5,0) []{};
			\vertex[fill] (v02) at (4,-1) []{};
			\vertex[fill] (v03) at (3,0) []{};

		    \draw[draw=black] (8,1) -- (7,0);
		    \draw[draw=black] (9,0) -- (8,-1);
	
			\vertex[fill] (v00) at (8,1) []{};
			\vertex[fill] (v01) at (9,0) []{};
			\vertex[fill] (v02) at (8,-1) []{};
			\vertex[fill] (v03) at (7,0) []{};
		\end{tikzpicture}
		\caption{The orthogonality graph of the \textbf{Shifts} UPB (and every other nontrivial UPB on $3$ qubits).}\label{fig:shifts_og}
	\end{figure}

\subsection{Minimum and Maximum Size}\label{section:minmax}

One of the first questions asked about unextendible product bases was what their possible sizes are. While a full answer to this question is still out of reach, the minimum and maximum size of qubit UPBs is now known. It was shown in \cite{Joh13UPB} that if we define a function $f : \mathbb{N} \rightarrow \mathbb{N}$ by
\begin{align}\label{eq:qubit_min_size}
	f(p) := \begin{cases}
		p + 1 & \text{if $p$ is odd} \\
		p + 2 & \text{if $p = 4$ or $p \equiv 2 (\text{mod } 4)$} \\
		p + 3 & \text{if $p = 8$} \\
		p + 4 & \text{otherwise,}
	\end{cases}
\end{align}
then the smallest $p$-qubit UPB has size $f(p)$.

At the other end of the spectrum, it is straightforward to see that the maximum size of a $p$-qubit UPB is $2^p$, since the standard basis forms a UPB. However, UPBs that span the entire $2^p$-dimensional space are typically not considered to be particularly interesting, so it is natural to instead ask for the maximum size of a \emph{nontrivial} UPB (i.e. one whose size is strictly less than $2^p$). It is straightforward to use the \textbf{Shifts} UPB together with induction and Proposition~\ref{prop:qubit_add_together} to show that there exists a nontrivial $p$-qubit UPB of size $2^p - 4$ for all $p \geq 3$. The following proposition, which is likely known, shows that this is always the largest nontrivial UPB.
\begin{prop}\label{prop:no_large_qubit_upbs}
	Let $p,s \in \mathbb{N}$. There does not exist a $p$-qubit UPB of size $s$ when $2^p - 4 < s < 2^p$.
\end{prop}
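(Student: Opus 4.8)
The plan is to rephrase the statement in terms of the orthogonal complement $W := \mathrm{span}(\mathcal{S})^\perp$ and to prove the equivalent claim that $d := \dim W = 2^p - s$ always satisfies $d = 0$ or $d \geq 4$. The first step is to record that condition~(c) is equivalent to $W$ containing no product state: a product state orthogonal to every $\ket{v} \in \mathcal{S}$ is precisely a nonzero product vector lying in $W$. Hence a value $d \in \{1,2,3\}$ would produce exactly a nontrivial $p$-qubit UPB of size $s$ with $2^p - 4 < s < 2^p$, so ruling out such $W$ proves the proposition. I would then argue by strong induction on $p$, the base cases $p \leq 2$ being the known nonexistence of nontrivial qubit UPBs.

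For the inductive step I would first dispose of the small UPBs: if $s \leq 2^{p-1}$ then $s \leq 2^p - 4$ whenever $p \geq 3$, so I may assume $s > 2^{p-1}$. In this regime a short lemma shows every qubit must carry orthogonality, i.e.\ on each qubit $\ell$ some two members of $\mathcal{S}$ use orthogonal single-qubit vectors; otherwise deleting qubit $\ell$ would leave $s$ pairwise-orthogonal product states in $(\bb{C}^2)^{\otimes(p-1)}$, forcing the contradiction $s \leq 2^{p-1}$. In orthogonality-graph language this says each qubit's graph has at least one nontrivial complete bipartite component.

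Next I fix such a qubit and, after changing its local basis, name the orthogonal pair $\{\ket{0},\ket{1}\}$, grouping $\mathcal{S}$ into the states $A$ using $\ket{0}$ there, the states $B$ using $\ket{1}$, and the remaining states $G$. In the clean case $G = \varnothing$ the set splits as $\{\ket{r}\otimes\ket{0} : \ket{r}\in R_A\} \cup \{\ket{r}\otimes\ket{1} : \ket{r}\in R_B\}$, where $R_A, R_B$ are orthonormal product sets on the other $p-1$ qubits. An extension argument (a product state orthogonal to all of $R_A$ would, after appending $\ket{0}$, extend $\mathcal{S}$) shows each of $R_A, R_B$ is itself a $(p-1)$-qubit UPB or a full product basis, and complete entanglement of $W$ forces each $R_A^\perp, R_B^\perp$ to contain no product vector. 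Thus $W = (\ket{0}\otimes R_A^\perp) \oplus (\ket{1}\otimes R_B^\perp)$, so $d = \dim R_A^\perp + \dim R_B^\perp$; by the induction hypothesis each summand is $0$ or at least $4$, giving $d = 0$ or $d \geq 4$ exactly as needed.

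The hard part will be the case $G \neq \varnothing$, i.e.\ when no single qubit decomposes through one orthogonal pair, so that $W$ no longer aligns with a fixed local basis and the clean identity $d = \dim R_A^\perp + \dim R_B^\perp$ is unavailable. Here the complete-bipartite structure of each qubit's orthogonality graph sorts its direction-classes into orthogonal pairs together with ``isolated'' classes, and I expect to handle the pieces separately: an isolated class has rest-parts orthogonal to \emph{every} other state, while extra orthogonal pairs force cross-pair orthogonality among rest-parts, and in each situation one feeds the appropriate rest-parts into the induction hypothesis to force some vector to be a product state and then exhibits an explicit extending product state (for instance, when $G$ is a single extra class with direction $\ket{g}$ and common rest-part $\ket{r}$, the induction forces $\ket{r}$ to be product and $\ket{g^{\perp}}\otimes\ket{r}$ extends $\mathcal{S}$). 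The genuine obstacle is that pure counting is too weak in this regime---tallying the orthonormality constraints on rest-parts only yields $s \leq 2^p$---so the argument must combine the bipartite-graph structure of each qubit with the inductive bound on smaller complements rather than rely on dimension counts alone.
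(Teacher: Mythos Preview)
Your approach is entirely different from the paper's. The paper gives a short argument via entanglement theory: from the UPB one forms $\rho := I - \sum_i \ketbra{v_i}{v_i}$, which has rank $2^p - s$, has positive partial transpose across every cut (partial transposition sends orthonormal product projectors to orthonormal product projectors), and is entangled by the range criterion since its range $W$ contains no product vector. The contradiction for $2^p - s \in \{1,2,3\}$ is then the fact that every multipartite PPT state of rank at most $3$ is separable --- rank $1$ being elementary, ranks $2$ and $3$ being the nontrivial result of \cite{CD13}.

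Your inductive scheme is sound in the clean case $G = \varnothing$, and the ``isolated class'' observation is also correct and requires no induction at all: if a direction $\ket{g}$ appears on qubit $\ell$ but $\ket{\overline{g}}$ does not, then for any $\ket{g}\otimes\ket{r} \in \mathcal{S}$ the product vector $\ket{\overline{g}}\otimes\ket{r}$ already extends $\mathcal{S}$, since every other $\ket{h}\otimes\ket{r'} \in \mathcal{S}$ has $\braket{g}{h}\neq 0$ and hence $\braket{r}{r'} = 0$. (Your remark that ``induction forces $\ket{r}$ to be product'' is superfluous --- $\ket{r}$ is product by hypothesis.) These two steps reduce to the situation where every qubit carries at least two full orthonormal bases.

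That residual case is exactly where the content lies, and your proposal does not resolve it. The phrase ``extra orthogonal pairs force cross-pair orthogonality among rest-parts, and one feeds the appropriate rest-parts into the induction hypothesis'' is not an argument. When qubit $\ell$ uses two or more bases, the rest-parts of the different basis-blocks do lie in mutually orthogonal subspaces of $(\bb{C}^2)^{\otimes(p-1)}$, but no individual block is forced to be unextendible on $p-1$ qubits: a product $\ket{z}$ orthogonal to all rest-parts of one block need not be orthogonal to the rest-parts of the other blocks, so tensoring $\ket{z}$ with the appropriate qubit-$\ell$ direction does not produce an extension of $\mathcal{S}$, and the induction hypothesis has nothing to act on. As you yourself concede, pure counting is too weak here. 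The paper bypasses this entirely by invoking the external rank-$\leq 3$ PPT separability theorem; without that theorem or an equivalent structural fact about low-dimensional completely entangled subspaces whose complement is product-spanned, your induction does not close.
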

\begin{proof}
	Given a UPB $\{\ket{v_1},\ldots,\ket{v_{s}}\} \subset (\mathbb{C}^2)^{\otimes p}$, we can construct the (unnormalized) $p$-qubit mixed quantum state $\rho := I - \sum_{i=1}^{s} \ketbra{v_i}{v_i}$, which has rank $2^p - s$ and has positive partial transpose across any partition of the qubits. Furthermore, $\rho$ is entangled by the range criterion \cite{H97}. If $s = 2^p - 1$ then ${\rm rank}(\rho) = 1$, and it is well-known that pure states with positive partial transpose are necessarily separable, which is a contradiction that shows that no UPB of size $s = 2^p - 1$ exists. Similarly, it was shown in that \cite{CD13} that every multipartite state of rank $2$ or $3$ with positive partial transpose is separable, which shows that no UPB of size $s = 2^p - 2$ or $s = 2^p - 3$ exists.
\end{proof}

\section{Four-Qubit UPBs}\label{section:4qubit}

This section is devoted to describing all of the nontrivial UPBs in $(\mathbb{C}^2)^{\otimes p}$ when $p = 4$. Unlike in the $p = 3$ case, which we saw earlier admits a very simple characterization in terms of the {\bf Shifts} UPB, there are many different four-qubit UPBs. More specifically, we will see here that there are exactly $1446$ inequivalent nontrivial four-qubit UPBs, $1137$ of which arise from combining two $3$-qubit UPBs via Proposition~\ref{prop:qubit_add_together} and $309$ of which are not decomposable in this way. Note that all $4$-qubit UPBs with at most two bases per qubit were found in \cite{SFABCLA13}, however this is the first characterization of \emph{all} $4$-qubit UPBs (including those with three or more bases per qubit).

The process of finding these $4$-qubit UPBs (steps 1 and 2 in Appendix~A) as well as the process of characterizing these UPBs and determining that they are inequivalent (step 3 in Appendix~A) were both done via computer search. It is generally not straightforward to see that a given UPB is indeed unextendible, and it also does not seem to be easy to determine whether or not two given UPBs are equivalent.

Rather than trying to prove that these UPBs are indeed unextendible or are inequivalent as we claim, here we focus instead on summarizing the results, categorizing them as efficiently as possible, and explaining where already-known UPBs fit into this characterization. We sort the different four-qubit UPBs by their size, starting with the minimal $6$-state UPB and working our way up to the maximal (nontrivial) $12$-state UPBs. For a succinct summary of these results, see Table~\ref{tab:4qubit}. An explicit list of all $1446$ inequivalent four-qubit UPBs is available for download from \cite{Joh4QubitUPBs}.

\subsection{Four-Qubit UPBs of 6 States}\label{section:4qubit6}

It was already known that there exists a $6$-state UPB on $4$ qubits, and that this is the minimum size possible \cite{Fen06}:
\begin{align*}
	\big\{ \ket{0000},\ket{0aa1},\ket{10ba},\ket{1a\overline{b}b},\ket{a1\overline{ab}},\ket{\overline{aa}1\overline{a}} \big\}.
\end{align*}
Our computer search showed that this $6$-state UPB is essentially unique---all other UPBs in this case are equivalent to it.

\subsection{Four-Qubit UPBs of 7 States}\label{section:4qubit7}

A $7$-state UPB on $4$ qubits was found via computer search in \cite{AFKKPLA12}:
\begin{align*}
	\big\{ \ket{0000},\ket{0aa1},\ket{0\overline{a}1a},\ket{100b},\ket{1\overline{a}a\overline{b}},\ket{aa10}, \ket{\overline{a}1\overline{aa}} \big\}.
\end{align*}
Once again, our computer search showed that this UPB is essentially unique in the sense that all other $7$-state UPBs on $4$ qubits are equivalent to it.

\subsection{Four-Qubit UPBs of 8 States}\label{section:4qubit8}

This case is much less trivial than the previous two cases. First, note that we can use Proposition~\ref{prop:qubit_add_together} to construct an $8$-state UPB via two copies of the $3$-qubit {\bf Shifts} UPB. In fact, there are many slightly different ways to do this, since we are free to let the bases used in one of the copies of {\bf Shifts} be the same or different from any of the bases used in the other copy of {\bf Shifts}. For example, we can consider the following two UPBs:
\begin{align*}
	\text{UPB}_1 & := \big\{ \ket{000} \otimes \ket{0}, \ket{1{+}{-}} \otimes \ket{0}, \ket{{-}1{+}} \otimes \ket{0}, \ket{{+}{-}1} \otimes \ket{0} \big\} \\
		& \quad \quad \cup \big\{ \ket{000} \otimes \ket{1}, \ket{1{+}{-}} \otimes \ket{1}, \ket{{-}1{+}} \otimes \ket{1}, \ket{{+}{-}1} \otimes \ket{1} \big\} \text{ and} \\
	\text{UPB}_2 & := \big\{ \ket{000} \otimes \ket{0}, \ket{1{+}{-}} \otimes \ket{0}, \ket{{-}1{+}} \otimes \ket{0}, \ket{{+}{-}1} \otimes \ket{0} \big\} \\
		& \quad \quad \cup \big\{ \ket{000} \otimes \ket{1}, \ket{1a\overline{a}} \otimes \ket{1}, \ket{\overline{a}1a} \otimes \ket{1}, \ket{a\overline{a}1} \otimes \ket{1} \big\}.
\end{align*}

It is straightforward to see that UPB$_1$ and UPB$_2$ are inequivalent, since they have different orthogonality graphs. However, they both can be seen as arising from Proposition~\ref{prop:qubit_add_together}: UPB$_1$ arises from following the construction given in its proof exactly, while UPB$_2$ arises from replacing $\{\ket{{+}},\ket{{-}}\}$ by $\{\ket{a},\ket{\overline{a}}\}$ on one copy of {\bf Shifts} before following the construction. Our computer search found that there are exactly $89$ inequivalent UPBs that arise from two copies of {\bf Shifts} and Proposition~\ref{prop:qubit_add_together} in this manner.

Furthermore, there are also $55$ inequivalent UPBs in this case that are really ``new''---they can not be constructed via Proposition~\ref{prop:qubit_add_together} in any way. One of these $55$ UPBs was found in \cite{AFKKPLA12}:
\begin{align*}
	\big\{ \ket{0000}, \ket{1a\overline{a}a}, \ket{a\overline{a}1\overline{a}}, \ket{\overline{a}1ab}, \ket{0a\overline{a}1}, \ket{1a\overline{aa}}, \ket{a1aa}, \ket{\overline{aa}1\overline{b}} \big\},
\end{align*}
and several more were found in \cite{SFABCLA13}.

This gives a total of $144$ inequivalent $8$-state UPBs on $4$ qubits. We note that some (but not all) of these new UPBs can be constructed via the method given in the proof of the upcoming Theorem~\ref{thm:qubit_4k_upbs}.

\subsection{Four-Qubit UPBs of 9 States}\label{section:4qubit9}

Our computer search found that there are exactly $11$ inequivalent $4$-qubit UPBs in this case, which are presented in their entirety in Table~\ref{tab:4qubit}. The following two of these UPBs were found in \cite{AFKKPLA12}:
\begin{align*}
	& \big\{ \ket{0000}, \ket{1a\overline{a}0}, \ket{a\overline{a}10}, \ket{\overline{a}1aa}, \ket{0001}, \ket{01\overline{a}1}, \ket{1\overline{a}0\overline{a}}, \ket{0011}, \ket{1011} \big\} \text{ and} \\
	& \big\{ \ket{0000}, \ket{\overline{a}a1a}, \ket{a1a1}, \ket{\overline{a}11\overline{a}}, \ket{aa\overline{a}1}, \ket{1\overline{aa}a}, \ket{10a\overline{a}}, \ket{\overline{a}10\overline{a}}, \ket{a1a0} \big\},
\end{align*}
while the other $9$ UPBs are new.
\begin{table}
\begin{center}
	\def\arraystretch{1.3}
    \begin{tabular}{@{}l l l@{}}
    \toprule
    size & \# & summary of $4$-qubit UPBs \\ \midrule
    $6$ & $1$ & $\ket{0000},\ket{0aa1},\ket{10ba},\ket{1a\overline{b}b},\ket{a1\overline{ab}},\ket{\overline{aa}1\overline{a}} \quad$ (unique) \\\hline
    $7$ & $1$ & $\ket{0000},\ket{0aa1},\ket{0\overline{a}1a},\ket{100b},\ket{1\overline{a}a\overline{b}},\ket{aa10}, \ket{\overline{a}1\overline{aa}} \quad$ (unique) \\\hline
    $8$ & $144$ & $89$ are of the form $({\bf Shifts} \otimes \ket{0}) \cup ({\bf Shifts} \otimes \ket{1})$ \\
     & & plus $55$ others, such as: \\
     & & $\ket{0000}, \ket{1a\overline{a}a}, \ket{a\overline{a}1\overline{a}}, \ket{\overline{a}1ab}, \ket{0a\overline{a}1}, \ket{1a\overline{aa}}, \ket{a1aa}, \ket{\overline{aa}1\overline{b}}$ \\\hline
    $9$ & $11$ & $\ket{0000}, \ket{1a\overline{a}0}, \ket{a\overline{a}10}, \ket{\overline{a}1aa}, \ket{0001}, \ket{01\overline{a}1}, \ket{1\overline{a}0\overline{a}}, \ket{0011}, \ket{1011}$ \\
    & & $\ket{0000}, \ket{\overline{a}a1a}, \ket{a1a1}, \ket{\overline{a}11\overline{a}}, \ket{aa\overline{a}1}, \ket{1\overline{aa}a}, \ket{10a\overline{a}}, \ket{\overline{a}10\overline{a}}, \ket{a1a0}$ \\
    & & $\ket{0000}, \ket{0001}, \ket{0010}, \ket{010a}, \ket{1aaa}, \ket{100\overline{a}}, \ket{11\overline{a}0}, \ket{a1a\overline{a}}, \ket{\overline{aa}11}$ \\
    & & $\ket{0000}, \ket{0001}, \ket{001a}, \ket{010b}, \ket{1aab}, \ket{100\overline{b}}, \ket{11\overline{a}a}, \ket{a1a\overline{b}}, \ket{\overline{aa}1\overline{a}}$ \\
    & & $\ket{0000}, \ket{0001}, \ket{001a}, \ket{01aa}, \ket{1aab}, \ket{10\overline{a}a}, \ket{110\overline{b}}, \ket{a1\overline{a}b}, \ket{\overline{aa}1\overline{a}}$ \\
    & & $\ket{0000}, \ket{0001}, \ket{01aa}, \ket{01\overline{a}a}, \ket{1a0a}, \ket{1\overline{aa}b}, \ket{a01\overline{b}}, \ket{a1a\overline{a}}, \ket{\overline{a}a1\overline{a}}$ \\
    & & $\ket{0000}, \ket{0001}, \ket{01aa}, \ket{01\overline{a}a}, \ket{1a0a}, \ket{1\overline{a}bb}, \ket{a01\overline{b}}, \ket{a1\overline{ba}}, \ket{\overline{a}a1\overline{a}}$ \\
    & & $\ket{0000}, \ket{0001}, \ket{001a}, \ket{01aa}, \ket{1a0b}, \ket{101a}, \ket{1\overline{a}a\overline{a}}, \ket{aa1\overline{a}}, \ket{\overline{a}1\overline{ab}}$ \\
    & & $\ket{0000}, \ket{001a}, \ket{001\overline{a}}, \ket{01a0}, \ket{1aaa}, \ket{10\overline{a}0}, \ket{111\overline{a}}, \ket{a1\overline{a}a}, \ket{\overline{aa}01}$ \\
    & & $\ket{0000}, \ket{001a}, \ket{001\overline{a}}, \ket{01a0}, \ket{1a1\overline{a}}, \ket{1000}, \ket{1\overline{a}a1}, \ket{aa01}, \ket{\overline{a}1\overline{a}a}$ \\
    & & $\ket{0000}, \ket{01aa}, \ket{0a1\overline{a}}, \ket{1110}, \ket{1a0a}, \ket{10a\overline{a}}, \ket{a01a}, \ket{a10\overline{a}}, \ket{\overline{aaa}1}$ \\ \hline
    $10$ & $80$ & $\ket{0000}, \ket{1a\overline{a}0}, \ket{a\overline{a}10}, \ket{\overline{a}1aa}, \ket{0001}, \ket{0011}, \ket{1001}, \ket{1011}, \ket{010\overline{a}}, \ket{11\overline{a}1}$ \\
    & & plus $79$ others \\\hline
    $11$ & $0$ & \\\hline
    $12$ & $1209$ & $1048$ are of the form $(\mathbf{Shifts} \otimes \ket{0}) \cup (\mathbf{B}_i \otimes \ket{1})$ for some $i$ (see Table~\ref{tab:3qubitproductbases}) \\
    & & plus $161$ others, such as: \\
    & & $\ket{0000}, \ket{\overline{a}aa1}, \ket{a11a}, \ket{\overline{a}1\overline{ab}}, \ket{1000}, \ket{a001},$ \\
    & & $\quad \ket{a10\overline{a}}, \ket{a010}, \ket{a011}, \ket{a11\overline{a}}, \ket{\overline{aa}1b}, \ket{a10a}$ \\\bottomrule
    \end{tabular}
    \caption{A summary of the $1446$ inequivalent $4$-qubit UPBs. A complete list can be found at \cite{Joh4QubitUPBs}.}\label{tab:4qubit}
\end{center}
\end{table}

\subsection{Four-Qubit UPBs of 10 States}\label{section:4qubit10}

Once again, it was already known that a $10$-state UPB exists, as one was found in \cite{AFKKPLA12}:
\begin{align*}
	\big\{ \ket{0000}, \ket{1a\overline{a}0}, \ket{a\overline{a}10}, \ket{\overline{a}1aa}, \ket{0001}, \ket{0011}, \ket{1001}, \ket{1011}, \ket{010\overline{a}}, \ket{11\overline{a}1} \big\},
\end{align*}
and one more was found in \cite{SFABCLA13}. We have found that there are $78$ more inequivalent UPBs, for a total of $80$.

\subsection{Four-Qubit UPBs of 11 States}\label{section:4qubit11}

Our computer search showed that there does not exist an $11$-state UPB on $4$ qubits. This case is interesting for at least two reasons. First, it shows that UPBs are not ``continuous'' in the sense that there can be nontrivial UPBs of sizes $s-1$ and $s+1$ in a given space, yet no UPB of size $s$ (we will see in Section~\ref{section:manyqubit} that UPBs are also not ``continuous'' in this sense when the number of qubits is odd and at least $5$).

Second, this is currently the only case where it is known that no UPB exists via means other than an explicit (human-readable) proof. This raises the question of whether or not there is a ``simple'' proof of the fact that there is no $11$-state UPB on $4$ qubits. More generally, it would be interesting to determine whether or not there exists a $p$-qubit UPB of size $2^p - 5$ when $p \geq 5$---we will see in Section~\ref{section:manyqubit} that this is the only unsolved case that is near the $2^p - 4$ upper bound.

\subsection{Four-Qubit UPBs of 12 States}\label{section:4qubit12}

The vast majority of $4$-qubit UPBs arise in this case. Similar to the $8$-state case considered in Section~\ref{section:4qubit8}, we can construct many $12$-state UPBs by using Proposition~\ref{prop:qubit_add_together} to combine $3$-qubit UPBs of size $4$ and $8$ (i.e., {\bf Shifts} and a full $3$-qubit product basis). However, things are more complicated in this case, as there are $17$ inequivalent $3$-qubit product bases of size $8$ that can be used in Proposition~\ref{prop:qubit_add_together}.
\begin{table}
\begin{center}
	\def\arraystretch{1.3}
    \begin{tabular}{@{}l l l@{}}
    \toprule
    $\quad \quad \ \ \ 3$-qubit product basis & & $4$-qubit UPBs \\ \midrule
    $\mathbf{B}_1 := \{\ket{000},\ket{001},\ket{010},\ket{011},\ket{100},\ket{101},\ket{110},\ket{111}\}$ & & $5$ \\
    $\mathbf{B}_{2} := \{\ket{000}, \ket{001}, \ket{010}, \ket{011}, \ket{100}, \ket{101}, \ket{11a}, \ket{11\overline{a}}\}$ & & $32$ \\
    $\mathbf{B}_{3} := \{\ket{000}, \ket{001}, \ket{010}, \ket{011}, \ket{10a}, \ket{10\overline{a}}, \ket{11b}, \ket{11\overline{b}}\}$ & & $47$ \\
    $\mathbf{B}_{4} := \{\ket{000}, \ket{001}, \ket{010}, \ket{011}, \ket{10a}, \ket{1a\overline{a}}, \ket{11a}, \ket{1\overline{aa}}\}$ & & $99$ \\
    $\mathbf{B}_{5} := \{\ket{000}, \ket{001}, \ket{010}, \ket{011}, \ket{1aa}, \ket{1a\overline{a}}, \ket{1\overline{a}a}, \ket{1\overline{aa}}\}$ & & $25$ \\
    $\mathbf{B}_{6} := \{\ket{000}, \ket{001}, \ket{010}, \ket{011}, \ket{1aa}, \ket{1a\overline{a}}, \ket{1\overline{a}b}, \ket{1\overline{ab}}\}$ & & $93$ \\
    $\mathbf{B}_7 := \{\ket{000}, \ket{001}, \ket{01a}, \ket{01\overline{a}}, \ket{100}, \ket{101}, \ket{11a}, \ket{11\overline{a}}\}$ & & $18$ \\
    $\mathbf{B}_{8} := \{\ket{000}, \ket{001}, \ket{01a}, \ket{01\overline{a}}, \ket{100}, \ket{1a1}, \ket{110}, \ket{1\overline{a}1}\}$ & & $85$ \\
    $\mathbf{B}_{9} := \{\ket{000}, \ket{001}, \ket{01a}, \ket{01\overline{a}}, \ket{10a}, \ket{10\overline{a}}, \ket{110}, \ket{111}\}$ & & $13$ \\
    $\mathbf{B}_{10} := \{\ket{000}, \ket{001}, \ket{01a}, \ket{01\overline{a}}, \ket{10b}, \ket{10\overline{b}}, \ket{110}, \ket{111}\}$ & & $34$ \\
    $\mathbf{B}_{11} := \{\ket{000}, \ket{001}, \ket{01a}, \ket{01\overline{a}}, \ket{10b}, \ket{10\overline{b}}, \ket{11c}, \ket{11\overline{c}}\}$ & & $26$ \\
    $\mathbf{B}_{12} := \{\ket{000}, \ket{001}, \ket{01a}, \ket{01\overline{a}}, \ket{10b}, \ket{1a\overline{b}}, \ket{11b}, \ket{1\overline{ab}}\}$ & & $143$ \\
    $\mathbf{B}_{13} := \{\ket{000}, \ket{001}, \ket{01a}, \ket{01\overline{a}}, \ket{1a0}, \ket{1a1}, \ket{1\overline{a}a}, \ket{1\overline{aa}}\}$ & & $51$ \\
    $\mathbf{B}_{14} := \{\ket{000}, \ket{001}, \ket{01a}, \ket{01\overline{a}}, \ket{1a0}, \ket{1a1}, \ket{1\overline{a}b}, \ket{1\overline{ab}}\}$ & & $142$ \\
    $\mathbf{B}_{15} := \{\ket{000}, \ket{001}, \ket{01a}, \ket{01\overline{a}}, \ket{1ab}, \ket{1\overline{a}b}, \ket{1b\overline{b}}, \ket{1\overline{bb}}\}$ & & $75$ \\
    $\mathbf{B}_{16} := \{\ket{000}, \ket{001}, \ket{01a}, \ket{01\overline{a}}, \ket{1ab}, \ket{1a\overline{b}}, \ket{1\overline{a}c}, \ket{1\overline{ac}}\}$ & & $81$ \\
    $\mathbf{B}_{17} := \{\ket{000}, \ket{01a}, \ket{01\overline{a}}, \ket{1a0}, \ket{1\overline{a}0}, \ket{a01}, \ket{\overline{a}01}, \ket{111}\}$ & & $79$ \\\cmidrule{3-3}
    \multicolumn{1}{r}{Total:} & & $1048$ \\\bottomrule
    \end{tabular}
    \caption{A summary of the $17$ inequivalent $3$-qubit orthogonal product bases. The table also gives the number of inequivalent $12$-state $4$-qubit UPBs that these product bases give rise to by being combined with {\bf Shifts} via Proposition~\ref{prop:qubit_add_together}.}\label{tab:3qubitproductbases}
\end{center}
\end{table}

These $17$ product bases as well as the number of inequivalent $12$-state $4$-qubit UPBs that they give rise to via Proposition~\ref{prop:qubit_add_together} are given in Table~\ref{tab:3qubitproductbases}. For example, there are $5$ inequivalent $4$-qubit UPBs of the form $(\mathbf{Shifts} \otimes \ket{0}) \cup (\mathbf{B}_1 \otimes \ket{1})$, where $\mathbf{B}_1$ is the standard basis of $(\mathbb{C}^2)^{\otimes 3}$. A total of $1048$ inequivalent $4$-qubit UPBs can be constructed in this way by using the $17$ different $3$-qubit product bases.

Furthermore, there are also $161$ inequivalent UPBs in this case that can not be constructed via Proposition~\ref{prop:qubit_add_together}, for a total of $1209$ inequivalent $12$-state UPBs on $4$ qubits. To the best of our knowledge, only two of these $161$ UPBs have been found before \cite{AFKKPLA12}:
\begin{align*}
	& \big\{ \ket{0000}, \ket{\overline{a}aa1}, \ket{a11a}, \ket{\overline{a}1\overline{ab}}, \ket{1000}, \ket{a001}, \ket{a10\overline{a}}, \ket{a010}, \ket{a011}, \ket{a11\overline{a}}, \ket{\overline{aa}1b}, \ket{a10a} \big\},\\
	& \big\{ \ket{0000}, \ket{1aaa}, \ket{a\overline{a}1b}, \ket{10\overline{ab}}, \ket{0ab1}, \ket{01\overline{bb}}, \ket{1\overline{a}0b}, \ket{1aa\overline{a}}, \ket{1\overline{a}a\overline{b}}, \ket{1a\overline{a}b}, \ket{11\overline{ab}}, \ket{\overline{aa}1b} \big\}.
\end{align*}

\section{Five- and Six-Qubit UPBs}\label{section:5qubit}

In $(\mathbb{C}^2)^{\otimes 5}$, the minimum and maximum sizes of UPBs are well-known to be $6$ and $28$, respectively, but otherwise very little is known. The only UPBs in this case that have appeared in the past that we are aware of are the {\bf GenShifts} UPB of size $6$ \cite{DMSST03} and the UPBs of sizes $12$--$26$ and $28$ that can be created by combining two $4$-qubit UPBs via Proposition~\ref{prop:qubit_add_together}. This leaves UPBs of size $7, 8, 9, 10, 11,$ and $27$ unaccounted for.

Our computer search has shown that there does not exist a $5$-qubit UPB of size $7$, but there do exist UPBs of size $8$, $9$, and $10$:
\begin{align*}
	& \big\{ \ket{00000},\ket{001aa},\ket{aaa1\overline{a}},\ket{aa\overline{aa}1},\ket{1\overline{a}bbb}, \ket{1\overline{ab}cc},\ket{\overline{a}1c\overline{bc}},\ket{\overline{a}1\overline{ccb}} \big\}, \\
	& \big\{ \ket{00000},\ket{0aa01},\ket{0b1a0},\ket{1abaa},\ket{1b\overline{b}bb},\ket{a\overline{b}a1\overline{a}},\ket{a1\overline{aab}},\ket{\overline{aa}0\overline{b}1},\ket{\overline{ab}11\overline{a}} \big\} \text{ and} \\
	& \big\{ \ket{00000},\ket{0a001},\ket{0b1aa},\ket{10abb},\ket{1abb\overline{b}},\ket{ab1\overline{ba}},\ket{a\overline{bb}1\overline{b}},\ket{a\overline{ba}1b},\ket{\overline{a}1\overline{ba}0},\ket{\overline{aaaa}1} \big\}.
\end{align*}
In fact, we have completely characterized qubit UPBs of $8$ or fewer states (on any number of qubits), which are available for download from \cite{Joh5QubitUPBs}. We have not been able to prove or disprove the existence of $5$-qubit UPBs of size $11$ or $27$, as they are beyond our computational capabilities. We leave them as open problems (see Table~\ref{tab:qubit_summary}).

The case of $6$-qubit UPBs is quite similar, with sizes $9$, $10$, $11$, $13$, and $59$ unknown. We have found a $6$-qubit UPB of $9$ states, leaving four cases still unsolved:
\begin{align*}
	\big\{ \ket{000000}, \ket{0aaaa1}, \ket{1aabaa}, \ket{1bb\overline{b}bb}, \ket{a\overline{a}b1\overline{b}c}, \ket{ab\overline{ba}1\overline{a}}, \ket{\overline{a}1\overline{a}cc\overline{b}}, \ket{b\overline{ab}a\overline{c}1}, \ket{\overline{bb}1\overline{cac}} \big\}
\end{align*}

\section{Many-Qubit UPBs}\label{section:manyqubit}

We now turn our attention to the construction of UPBs on an arbitrary number of qubits. We already saw that $p$-qubit UPBs are not ``continuous'' when $p = 4$ (where there are UPBs of size $10$ and $12$, but none of size $11$) or $p = 5$ (where there are UPBs of size $6$ and $8$, but none of size $7$). Our first result shows that the same is true whenever $p \geq 5$ is odd.
\begin{prop}\label{prop:qubit_no_p2}
	If $p$ is odd then there does not exist a UPB in $(\bb{C}^{2})^{\otimes p}$ consisting of exactly $p+2$ states.
\end{prop}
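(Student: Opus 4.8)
The plan is to work entirely at the level of orthogonality graphs, using the special feature of $\mathbb{C}^2$ noted in Section~\ref{section:orthog_graphs}: a vector in $\mathbb{C}^2$ has a \emph{unique} orthogonal direction. First I would translate unextendibility into a purely combinatorial covering condition. On each qubit $\ell$, the states of a candidate product basis $\mathcal{S}$ partition into direction classes (the shaded regions of Figure~\ref{fig:2dim_eg_compact}), and a product state $\ket{z} = \ket{z_1}\otimes\cdots\otimes\ket{z_p}$ is orthogonal to $\ket{v_i}$ exactly when, on some qubit $\ell$, $\ket{z_\ell}$ is the (unique) direction orthogonal to $\ket{v_i}$ on that qubit. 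Since $\ket{z_\ell}$ has only one orthogonal direction, choosing $\ket{z_\ell}$ ``covers'' precisely one direction class on qubit $\ell$. Hence $\mathcal{S}$ is a UPB if and only if there is no way to select one direction class from each of the $p$ qubits whose union is all of $\mathcal{S}$. This transversal-covering reformulation is the engine of the argument, and (as in Proposition~\ref{prop:qubit_add_together}) a selection that covers $\mathcal{S}$ produces a genuine extending product state, so the combinatorial and Hilbert-space notions of unextendibility coincide.

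With this in hand I would recall the size constraints already available. Equation~(\ref{eq:qubit_min_size}) gives $s \geq p+1$, and for odd $p$ the extremal size $p+1$ is realized by taking each qubit's orthogonality graph to be a perfect matching, so that the $p$ matchings form a $1$-factorization of $K_{p+1}$; such a configuration is automatically unextendible because selecting one class per qubit covers at most $p < p+1$ states. Assuming toward a contradiction that a UPB $\mathcal{S}$ of size $s = p+2$ exists, note that $p+2$ is \emph{odd}, and the covering reformulation is now tight: a selection of one class per qubit covers at most $p$ of the ``singleton'' states, so unextendibility is free only if the direction classes stay small, whereas the orthogonality requirement (every pair of states must be orthogonal on some qubit, i.e.\ the $p$ unions of complete bipartite graphs must edge-cover $K_{p+2}$) forces some classes to have size at least $2$.

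The crux is to show these two demands are incompatible precisely when $p+2$ is odd. The strategy I would pursue is to prove that an unextendible edge-cover of $K_{p+2}$ by $p$ disjoint-union-of-complete-bipartite graphs induces a perfect-matching (equivalently, $1$-factorization) structure on a complete graph with $p+2$ vertices; since $p+2$ is odd, the standard handshake/parity obstruction (an odd complete graph has no perfect matching and is not $1$-factorable) then yields the contradiction. This is consistent with the rest of the paper: when $p \equiv 2 \pmod 4$ the quantity $p+2$ is even, the parity obstruction disappears, and indeed size $p+2$ UPBs exist there as the minimal ones.

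I expect the main obstacle to be exactly this reduction from ``unextendible edge-cover'' to the parity statement. The difficulty is that one cannot argue by a naive count of orthogonalities alone, since the minimal $p \equiv 2 \pmod 4$ UPBs already use complete bipartite blocks $K_{a,b}$ with $a,b \geq 2$ to satisfy the edge-cover demand while remaining unextendible, and such blocks evade any simple matching bound. The delicate work is therefore to track which classes a covering selection can choose in the presence of these larger blocks and to use a Hall/defect-type argument to either assemble a covering transversal (contradicting unextendibility) or force the near-matching structure on which the parity of $p+2$ can bite. Making the oddness of $p$ emerge cleanly from this analysis---and in particular isolating what fails for odd $p$ but succeeds for $p \equiv 2 \pmod 4$---is the heart of the proof.
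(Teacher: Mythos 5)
Your covering reformulation of unextendibility is exactly the right engine, and you have correctly located where the parity of $p+2$ must enter. But what you have written is a plan rather than a proof: the passage beginning ``the crux is to show these two demands are incompatible'' announces the reduction you would need without carrying it out, and your final paragraph concedes that ``the delicate work'' and ``the heart of the proof'' remain to be done. That missing work is essentially the whole content of the proposition. Concretely, the paper pins down the structure of each qubit's orthogonality graph by two short covering arguments of precisely the type you describe but never execute: (i) no direction class can contain $3$ or more states, since a product state orthogonal to those $3$ on that qubit and to one further state on each of the other $p-1$ qubits would hit $3+(p-1)=p+2$ states, extending the set; and (ii) each qubit carries \emph{exactly one} class of size $2$ --- here one invokes the parity fact from \cite{Joh13UPB} that each party of a UPB of odd size must have an odd number of such pairs, and rules out three or more pairs on a single qubit by exhibiting a covering that takes $2$ states on that qubit, $2$ disjoint states via a pair on another qubit, and $1$ on each of the remaining $p-2$ qubits, again totalling $p+2$.

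Once that structure is forced, the endgame is not the $1$-factorization/perfect-matching obstruction you propose but a bare edge count: with one pair and $p$ singletons per qubit, each qubit's graph is a disjoint union of one $K_{2,1}$ and at most $(p-1)/2$ copies of $K_{1,1}$, hence has at most $(p+3)/2$ edges, so all $p$ qubits together supply at most $p(p+3)/2$ edges --- strictly fewer than the $(p+1)(p+2)/2$ edges of $K_{p+2}$ required for mutual orthogonality. Your instinct that ``a naive count of orthogonalities alone'' cannot work is therefore slightly misplaced: the final count \emph{is} naive, but only after the covering arguments have eliminated every block with a part of size at least $3$ and every qubit with more than one part of size $2$; and it is the parity lemma feeding step (ii) that genuinely uses the oddness of $p+2$ and fails for $p \equiv 2 \pmod 4$, where qubits may carry zero or several pairs and minimal UPBs of size $p+2$ really do exist. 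Without steps (i) and (ii), your Hall/defect-type argument has nothing to bite on, so as it stands the proposal does not establish the result.
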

\begin{proof}
	Suppose for a contradiction that such a UPB exists. We first note that there can not be a set of $3$ or more states of the UPB that are equal to each other on a given qubit, or else unextendibility is immediately violated, since we could construct a product state orthogonal to all $3$ of those states on that qubit and orthogonal to one other state on each of the other $p-1$ qubits, for a total of all $3 + (p-1) = p+2$ states.
	
	Additionally, since $p+2$ is odd, \cite[Lemma~2]{Joh13UPB} implies that on every party there is a pair of two states of the UPB that are equal to each other (and furthermore, the number of such pairs must be odd).
	
	We now argue that there must be exactly one such pair on every party. To see this, suppose for a contradiction that there are three or more pairs of states that are equal to each other on some qubit (which we assume without loss of generality is the first qubit). On the second qubit, there is at least one pair of states that are equal to each other, and this pair contains no vertices in common with at least one of the pairs of states that are equal to each other on the first qubit. Thus we can find a product state that is orthogonal to $2$ states on the second qubit, $2$ more states on the first qubit, and $1$ more state on each of the remaining $p-2$ qubits, for a total of all $2 + 2 + (p-2) = p+2$ states. It follows that the UPB is extendible, so in fact there can only be one pair of equal states on each party, as in Figure~\ref{fig:prop_p2_proof}.
	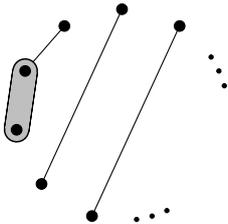
\begin{figure}[htb]
		\centering
		\begin{tikzpicture}[x=1.4cm, y=1.4cm, label distance=0cm]
		    \draw[draw=black] (-0.910,0.415) -- (-0.541,0.841);
		    \draw[draw=black] (0,1) -- (-0.756,-0.655);
		    \draw[draw=black] (0.541,0.841) -- (-0.282,-0.959);
	
		    \filldraw[fill=lightgray,line width=0.34cm,line join=round,draw=black] (-0.910,0.415) -- (-0.990,-0.142) -- (-0.9500,0.1365) -- cycle;
	    	\filldraw[fill=lightgray,line width=0.30cm,line join=round,draw=lightgray] (-0.910,0.415) -- (-0.990,-0.142) -- (-0.9500,0.1365) -- cycle;
	
			\vertex[fill] (v00) at (0,1) []{};
			\vertex[fill] (v01) at (-0.541,0.841) []{};
			\vertex[fill] (v02) at (-0.910,0.415) []{};
			\vertex[fill] (v03) at (-0.990,-0.142) []{};
			\vertex[fill] (v04) at (-0.756,-0.655) []{};
			\vertex[fill] (v05) at (-0.282,-0.959) []{};
			\vertex[fill] (v010) at (0.541,0.841) []{};
	
			\draw[thin,fill] (0.837,0.547) circle(0.02);
			\draw[thin,fill] (0.910,0.415) circle(0.02);
			\draw[thin,fill] (0.961,0.275) circle(0.02);
			
			\draw[thin,fill] (0.137,-0.991) circle(0.02);
			\draw[thin,fill] (0.282,-0.959) circle(0.02);
			\draw[thin,fill] (0.421,-0.907) circle(0.02);
		\end{tikzpicture}
		\caption{What the orthogonality graph of every party of a $(p+2)$-state UPB on $p$ qubits would have to look like when $p$ is odd (up to repositioning vertices).}\label{fig:prop_p2_proof}
	\end{figure}

	However, it then directly follows that the orthogonality graph of each qubit can contain no more than $(p+1)/2$ edges, so the orthogonality graph of all $p$ qubits contains no more than $p(p+1)/2$ edges. However, in order for the $p+2$ states of the UPB to be mutually orthogonal, there would have to be at least $(p+1)(p+2)/2 > p(p+1)/2$ edges present in the orthogonality graph, so it follows that some of these states are not orthogonal on any qubit and thus do not form a UPB.
\end{proof}

We have now seen examples that demonstrate that there are sizes near the minimal size $f(p)$ (defined in Equation~\eqref{eq:qubit_min_size}) for which no $p$-qubit UPB exists, and similarly there are sizes near the maximal size $2^p - 4$ for which no $p$-qubit UPB exists (e.g., there is no $4$-qubit UPB of size $11 = 2^p - 5$). We now show that these are essentially the only possible cases where qubit UPBs do not exist---there exist UPBs of all sizes that are sufficiently far in between the minimal and maximal sizes.
\begin{thm}\label{thm:many_qubit_sizes}
	If $p \geq 7$ then there exists a $p$-qubit UPB of size $s$ whenever
	\begin{align*}
		\frac{p^2 + 3p - 30}{2} \leq s \leq 2^p - 6.
	\end{align*}
\end{thm}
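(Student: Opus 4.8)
The goal is to realize every size $s$ in the stated range as a $p$-qubit UPB, and the natural engine is Proposition~\ref{prop:qubit_add_together}, which lets us glue a $p$-qubit UPB of size $s_1$ and another of size $s_2$ into a $(p+1)$-qubit UPB of size $s_1+s_2$. So my plan is to induct on $p$, showing that if $(p-1)$-qubit UPBs exist for a sufficiently long contiguous (or nearly contiguous) band of sizes, then $p$-qubit UPBs exist for an even longer band. The key observation is that if $(p-1)$-qubit UPBs exist for all sizes in some interval $[a,b]$, then by pairing sizes $s_1,s_2 \in [a,b]$ we obtain $p$-qubit UPBs of every size in $[2a, 2b]$, which is an interval of length $2(b-a)$ — strictly longer than the input whenever $b > a$. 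Thus the reachable band of sizes roughly doubles in width at each step, which is exactly the kind of growth needed to cover an interval whose lower endpoint grows only quadratically in $p$ while the upper endpoint $2^p-6$ grows exponentially.

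\medskip

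First I would establish a clean base case. By the four-qubit characterization in Section~\ref{section:4qubit}, $p=4$ UPBs exist for every size $s$ with $6 \le s \le 10$ and also $s=12$ (but not $11$). The troublesome gap at $s=2^p-5$ forces me to carry the upper endpoint as $2^p-6$ rather than $2^p-5$ throughout, and indeed the theorem is stated with upper bound $2^p-6$ precisely to avoid this. For the induction I would prove the auxiliary claim: for $p \ge 4$, there exists a $p$-qubit UPB of every size $s$ with $g(p) \le s \le 2^p-6$, where $g(p) := (p^2+3p-30)/2$. I would verify $g(4)=(16+12-30)/2=-1$, so at $p=4$ the claim asks for all sizes up to $2^4-6=10$; combined with a base value this is covered by the four-qubit data. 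One should double-check the small-$p$ arithmetic of $g$ carefully, since $g$ is negative or tiny for $p$ near $4$--$6$, and the theorem is only asserted for $p \ge 7$.

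\medskip

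The inductive step is where the doubling argument does the work. Assuming $p$-qubit UPBs exist for all $s \in [g(p),\, 2^p-6]$, I want $(p+1)$-qubit UPBs for all $s \in [g(p+1),\, 2^{p+1}-6]$. Gluing two $p$-qubit UPBs whose sizes each lie in $[g(p), 2^p-6]$ produces, via Proposition~\ref{prop:qubit_add_together}, $(p+1)$-qubit UPBs of \emph{every} size in $[2g(p),\, 2^{p+1}-12]$, because the sumset of an integer interval with itself is the full interval of sums. To reach up to $2^{p+1}-6$ I would instead pair a size-$(2^p-4)$ nontrivial $p$-qubit UPB (which exists for all $p\ge 3$, as noted before Proposition~\ref{prop:no_large_qubit_upbs}) with $p$-qubit UPBs of sizes in $\{2^p-2,\,2^p-4,\dots\}$ — here I must be careful, since $2^p-2$ and $2^p-3$ are \emph{not} attainable by Proposition~\ref{prop:no_large_qubit_upbs}, but $2^p$ (the trivial full basis) and $2^p-4$ are. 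Concretely, $(2^p-4)+(2^p-4)=2^{p+1}-8$ and $(2^p-4)+2^p = 2^{p+1}-4$, so the top of the range near $2^{p+1}-6$ needs a small separate check; pairing $2^p-4$ with sizes $2^p,\ 2^p-4,\ 2^p-5$(?) must be reconciled against the nonexistence results, and I would lean on the trivial full basis (size $2^p$) being a legitimate UPB to fill the very top, landing $(p+1)$-qubit sizes $2^{p+1}-6,\ 2^{p+1}-7,\dots$ by gluing $2^p$ with $2^p-6, 2^p-7,\dots$. Finally, the lower endpoint: I must confirm $2g(p) \le g(p+1)$ so that the doubled interval reaches down at least as far as the new required minimum, i.e.\ check $p^2+3p-30 \le ((p+1)^2+3(p+1)-30)/2$, which holds for all $p$ in the relevant range.

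\medskip

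The main obstacle, and the part I expect to require genuine care rather than routine bookkeeping, is covering the sizes in the narrow window just below the top endpoint $2^{p+1}-6$. The doubling argument comfortably fills the bulk of the interval but produces gaps near the maximum because the attainable $p$-qubit sizes near $2^p$ are themselves sparse (sizes $2^p-1,2^p-2,2^p-3$ are forbidden), so sums of two near-maximal sizes skip values. Reconciling exactly which of $2^{p+1}-6,\,2^{p+1}-7,\ldots$ can be hit — and confirming that the chosen decompositions use only \emph{attainable} summands and stay within the inductive hypothesis's range $[g(p),2^p-6]$ together with the special sizes $2^p-4$ and $2^p$ — is the delicate accounting I would have to get right. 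The lower-endpoint verification $2g(p)\le g(p+1)$ and the base-case arithmetic are then straightforward quadratic inequalities.
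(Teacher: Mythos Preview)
Your overall architecture---induct on $p$ using Proposition~\ref{prop:qubit_add_together}, with the sumset $I_0^p + I_0^p$ doing the heavy lifting in the middle and gluing with the full basis of size $2^p$ to reach the top---matches the paper's approach. But the step covering the \emph{bottom} of the interval is wrong, and this is a genuine gap, not just bookkeeping.

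You assert that you need $2g(p) \le g(p+1)$ and that this ``holds for all $p$ in the relevant range.'' It does not. With $g(p) = (p^2+3p-30)/2$ the inequality $2g(p)\le g(p+1)$ becomes
\[
p^2+3p-30 \;\le\; \tfrac{(p+1)^2+3(p+1)-30}{2} \;=\; \tfrac{p^2+5p-26}{2},
\]
i.e.\ $p^2+p-34 \le 0$, which fails for every $p \ge 6$. Concretely, $g(7)=20$ and $g(8)=29$, so doubling the $p=7$ band gives lower endpoint $40$, far above the required $29$. The doubled interval $[2g(p),\,2^{p+1}-12]$ simply does not reach down to $g(p+1)$; the sumset grows in length but its \emph{minimum} roughly doubles as well, so you lose the low end at every step.

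The missing ingredient is a third family of pairings: glue a \emph{minimal}-size $p$-qubit UPB (of size $f(p)$, which is between $p+1$ and $p+4$) with a UPB of size in $[g(p),\,2^p-6]$. This produces $(p+1)$-qubit UPBs of every size in $[f(p)+g(p),\, f(p)+2^p-6]$, and since $f(p)+g(p)$ is essentially $g(p+1)$ (indeed the paper works with $\sum_{k=4}^{p-1}f(k)$ precisely so this is an equality), the bottom of the new range is now covered. One then checks that this interval overlaps with the doubled interval $[2g(p),\,2^{p+1}-12]$, which in turn overlaps with the top interval $[2^p+g(p),\,2^{p+1}-6]$ coming from gluing with the full basis---both of those overlap checks are genuine inequalities of the form (polynomial in $p$) $\le 2^p$ and do hold for $p\ge 7$. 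Your handling of the top end via the trivial basis is correct in spirit; it is only the bottom that your proposal leaves uncovered.
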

Before proving this result, we note that we actually prove the slightly better lower bound $\sum_{k=4}^{p-1} f(k)$. However, these two lower bounds never differ by more than $2$ (which is proved in Appendix~B), so we prefer the present statement of the result with the lower bound $(p^2 + 3p - 30)/2$, which is much easier to work with.
\begin{proof}
	We prove the result via Proposition~\ref{prop:qubit_add_together} and induction on $p$. As indicated above, we actually prove the slightly stronger statement that such a UPB exists whenever $\sum_{k=4}^{p-1} f(k) \leq s \leq 2^p - 6$.
	
	For the base case $p = 7$, recall from Section~\ref{section:5qubit} that there exist $6$-qubit UPBs of sizes 8, 9, 12, 14--58, 60, and 64. It follows from Proposition~\ref{prop:qubit_add_together} that there exist $7$-qubit UPBs of sizes 16--18, 20--122, 124, and 128. We thus see that there is a $7$-qubit UPB of any size from $\sum_{k=4}^{6} f(k) = 6 + 6 + 8 = 20$ to $2^7 - 6 = 122$, as desired.
	
	For the inductive step, fix $p$ and define the following four intervals of positive integers:
	\begin{align*}
		I_0^p & := \big[\sum_{k=4}^{p-1} f(k), 2^p - 6\big] &
		I_1^p & := \big[\sum_{k=4}^{p} f(k), f(p) + (2^p - 6)\big] \\
		I_2^p & := \big[ 2\sum_{k=4}^{p-1} f(k), 2^{p+1} - 12 \big] &
		I_3^p & := \big[ 2^p + \sum_{k=4}^{p-1} f(k), 2^{p+1} - 6 \big].
	\end{align*}
	Assume that there exist $p$-qubit UPBs of all sizes $s \in I_0^p$ and notice that the intervals we have defined satisfy the following relationships:
	\begin{align}\label{eq:interval_rel}
		I_1^p & = f(p) + I_0^p, & I_2^p & = I_0^p + I_0^p, & I_3^p & = 2^p + I_0^p.
	\end{align}
	
	Equations~\eqref{eq:interval_rel} hint at the remainder of the proof. We can combine a minimal $p$-qubit UPB of size $f(p)$ with one of the $p$-qubit UPBs with size in $I_0^p$ via Proposition~\ref{prop:qubit_add_together} to obtain a $(p+1)$-qubit UPB of any size in $I_1^p$. Similarly, we can combine two $p$-qubit UPBs with sizes in $I_0^p$ to obtain $(p+1)$-qubit UPBs of any size in $I_2^p$. Finally, we can combine a $p$-qubit UPB of size $2^p$ (e.g., the standard basis) with one of the $p$-qubit UPBs with size in $I_0^p$ to obtain a $(p+1)$-qubit UPB of any size in $I_3^p$.

	In order to complete the inductive step and the proof, it suffices to show that $I_1^p \cup I_2^p \cup I_3^p = I_0^{p+1}$. It is clear that the minimal values of $I_1^p$ and $I_0^{p+1}$ coincide, as do the maximal values of $I_3^p$ and $I_0^{p+1}$, so it is enough to show that $I_2^p$ overlaps with each of $I_1^p$ and $I_3^p$.
	
	In order to show that $I_1^p$ and $I_2^p$ overlap, we must show that $2\sum_{k=4}^{p-1} f(k) \leq f(p) + (2^p - 6)$. This inequality can be seen from noting that $f(k) \leq k+4$, so
	\begin{align*}
		2\sum_{k=4}^{p-1} f(k) \leq 2\sum_{k=4}^{p-1} (k+4) = p^2 + 7p - 44 \leq 2^p - 6 \leq f(p) + (2^p - 6),
	\end{align*}
	where we note that the second-to-last inequality can easily be verified by typical methods from calculus.
	
	In order to show that $I_2^p$ and $I_3^p$ overlap, we must show that $2^p + \sum_{k=4}^{p-1} f(k) \leq 2^{p+1} - 12$. Similar to before, this inequality follows straightforwardly:
	\begin{align*}
		2^p + \sum_{k=4}^{p-1} f(k) \leq 2^p + \sum_{k=4}^{p-1} (k+4) = 2^p + (p^2 + 7p - 44)/2 \leq 2^{p+1} - 12,
	\end{align*}
	where the final inequality once again can be verified by straightforward calculus. It follows that $I_1^p \cup I_2^p \cup I_3^p = I_0^{p+1}$, as desired, which completes the proof.
\end{proof}

As an immediate corollary of Theorem~\ref{thm:many_qubit_sizes}, we note that as $p \rightarrow \infty$, almost all cardinalities in the interval $[1,2^p]$ are attainable as the size of a $p$-qubit UPB. This fact can be seen by noting that the proportion of attainable cardinalities is at least
\begin{align*}
	\frac{(2^p - 6) - (p^2 + 3p - 30)/2 + 1}{2^p},
\end{align*}
which tends to $1$ as $p \rightarrow \infty$.

On the other hand, Theorem~\ref{thm:many_qubit_sizes} does not place a very good bound on how large of a ``gap'' $g_p$ there can be such that there exist nontrivial $p$-qubit UPBs of size $s$ and $s + g_p + 1$, but no $p$-qubit UPB of any intermediate size $s+1,\ldots,s+g_p$. Indeed, Theorem~\ref{thm:many_qubit_sizes} does not even guarantee that the maximal value of $g_p$ stays bounded as $p \rightarrow \infty$, since the difference between $(p^2 + 3p - 30)/2$ and $f(p)$ tends to infinity as $p$ does.

We now show that there is indeed an absolute upper bound on how large of a ``gap'' in qubit UPB sizes there can be: $g_p \leq 7$ regardless of $p$, and if $p \not\equiv 1 \, (\text{mod } 4)$ then $g_p \leq 3$. The construction of UPBs presented in the proof of the following theorem generalizes {\bf Shifts} as well as the $5$-qubit UPB of size $8$ that was presented in Section~\ref{section:5qubit}. This result also generalizes \cite[Lemma~4]{Joh13UPB}.
\begin{thm}\label{thm:qubit_4k_upbs}
	Let $p,s \in \mathbb{N}$ be such that $p+1 \leq s \leq 2^p$ and $s$ is a multiple of $4$. Then there is a UPB in $(\bb{C}^{2})^{\otimes p}$ of cardinality $s$, with the possible exception of the case when $p \equiv 1 \, (\text{mod } 4)$ and $s = 2p + 2$.
\end{thm}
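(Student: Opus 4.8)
The goal is to construct, for each admissible multiple of $4$, a $p$-qubit UPB of that exact size. My plan is to exhibit an explicit parametrized family of product states whose orthogonality graph can be verified to satisfy properties (a)--(c), and then show this family realizes every multiple of $4$ in the stated range. Since the theorem is stated to generalize both \textbf{Shifts} (a $4$-state, $3$-qubit UPB) and the $5$-qubit UPB of size $8$ from Section~\ref{section:5qubit}, I would first look carefully at those two examples, write their orthogonality graphs in the compact shaded-region form of Figure~\ref{fig:2dim_eg_compact}, and try to spot the repeating ``shift'' pattern that lets one add a fixed block of $4$ states for each additional qubit. The natural guess is a construction where the states are grouped into blocks of $4$, with each new group of states introducing one new distinguishing basis on a cyclically shifted set of qubits, so that the total size grows in increments of $4$ as $p$ increases or as blocks are added.

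First I would pin down the base constructions for small $p$ and establish a recursive engine. The cleanest route is almost certainly induction via Proposition~\ref{prop:qubit_add_together}: if I have a $p$-qubit UPB of size $s$ and a $p$-qubit UPB of size $s'$, I get a $(p+1)$-qubit UPB of size $s+s'$. Because the standard basis is a trivial $p$-qubit UPB of size $2^p$ (itself a multiple of $4$ for $p\geq 2$), and because I expect a minimal multiple-of-$4$ UPB of size roughly $p+1$ rounded up, combining these building blocks should let me step through all intermediate multiples of $4$. The parity condition ``$s$ is a multiple of $4$'' strongly suggests that the inductive step preserves divisibility by $4$ automatically, and that the exceptional case $p\equiv 1\,(\text{mod }4)$, $s=2p+2$ is precisely the spot where the smallest available multiple-of-$4$ block fails to align with $2p+2$ (note $2p+2\equiv 0\,(\text{mod }4)$ exactly when $p$ is odd, so this is genuinely a multiple of $4$ and the exception is real rather than vacuous).

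The central technical task is to verify unextendibility (property~(c)) of the explicit family, and I expect this to be the main obstacle. Orthogonality (property~(b)) is easy to check from the orthogonality graph once the states are written down, but unextendibility requires showing no product state $\ket{z}=\ket{z_1}\otimes\cdots\otimes\ket{z_p}$ is orthogonal to every member. The strategy here is the standard counting/combinatorial argument on the orthogonality graph: for $\ket{z}$ to be orthogonal to a state on qubit $\ell$, the corresponding vertices must be joined by an edge of color $\ell$, and each single-qubit factor $\ket{z_\ell}$ can ``kill'' states only within one shaded region of that qubit's graph. I would argue that the shift structure forces any hypothetical orthogonal $\ket{z}$ to satisfy an infeasible system of orthogonality constraints qubit-by-qubit, exactly mirroring the argument used for \textbf{Shifts} and the one appearing in the proof of Proposition~\ref{prop:qubit_no_p2}. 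Getting the bookkeeping right---so that every state is covered on some qubit while no single product state can simultaneously be orthogonal to all of them---is the delicate part, and I would manage it by organizing the states into the cyclic-shift blocks and tracking which region each qubit's factor would have to lie in.

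Finally, to close the range, I would combine the explicit minimal construction with Proposition~\ref{prop:qubit_add_together} and the trivial $2^p$-state basis to fill in all multiples of $4$ from the minimal size up to $2^p$, handling the residue classes of $p$ modulo $4$ separately since $f(p)$ and the alignment of $2p+2$ with $4\mathbb{Z}$ depend on $p\bmod 4$. I would treat the exceptional case by checking directly that the construction cannot produce size $2p+2$ when $p\equiv 1\,(\text{mod }4)$, and flag (as the theorem does) that this single size is left open rather than proven impossible. The reward of this theorem, as the surrounding text indicates, is the bound $g_p \leq 7$ (and $g_p\leq 3$ when $p\not\equiv 1$) on the maximal gap between attainable UPB sizes, so I would want the construction to be tight enough that consecutive attainable multiples of $4$ differ by exactly $4$, leaving only the sub-gap between multiples to be filled by the other results in the section.
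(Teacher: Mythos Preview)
Your proposal correctly identifies the high-level scaffolding---use Proposition~\ref{prop:qubit_add_together} inductively to reach large $s$ once small $s$ are handled, and verify unextendibility by a counting argument on the orthogonality graph---but it is missing the two concrete constructions that actually do the work in the range $p+1 \le s \le 2p$, and without them the plan does not go through.

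The paper does not build the family by ``adding a block of $4$ states per qubit'' or by a cyclic-shift pattern in the sense you describe. Instead it proceeds in the opposite direction: fix $s=4k$ and first build a UPB on $p = s/2 = 2k$ qubits. On one qubit the orthogonality graph is a perfect matching (so a product state can hit at most $1$ vertex there); on each of the remaining $2k-1$ qubits the $s$ vertices are grouped into $k$ pairs, and the pairs are connected using a \emph{$1$-factorization of $K_{2k}$}. This guarantees both that the union is the complete graph (orthogonality) and that a product state can hit at most $2$ vertices per remaining qubit, giving the bound $1 + 2(2k-1) = s-1 < s$ for unextendibility. The $1$-factorization is the specific combinatorial input your outline lacks; a generic ``shift'' pattern will not automatically cover every edge of $K_s$ exactly once across the $p$ colors.

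The second missing idea is how to get from $p = s/2$ up to $p = s-1$ with the \emph{same} $s$. The paper introduces a ``splitting'' move on a single qubit: replace each orthogonal pair $\{\ket{a},\ket{a},\ket{\overline a},\ket{\overline a}\}$ on that qubit by the two-qubit states $\{\ket{aa},\ket{bb},\ket{\overline a\,\overline b},\ket{\overline b\,\overline a}\}$. This keeps $s$ fixed, increases $p$ by one, and preserves both orthogonality and the ``at most $2$ per qubit'' count needed for unextendibility. Splitting anywhere from $0$ to $s/2-1$ qubits then realizes every $p$ with $s/2 \le p \le s-1$, i.e.\ every $s$ with $p+1 \le s \le 2p$. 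Your proposal has no analogue of this operation, and the inductive use of Proposition~\ref{prop:qubit_add_together} alone cannot reach these small sizes (it only produces sizes that are sums of two smaller UPB sizes, which bottoms out too high). Once you have the $1$-factorization construction and the splitting move, the rest of your outline---induction via Proposition~\ref{prop:qubit_add_together} for $s \ge 2p+1$, and the case split on $p \bmod 4$---is exactly what the paper does.
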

\begin{proof}
	We note that it suffices to construct a UPB in the case when $p+1 \leq s \leq 2p$ since the case when $s \geq 2p + 1$ follows directly from Proposition~\ref{prop:qubit_add_together} and induction. For example, when $p \equiv 0 \, (\text{mod } 4)$, we know (by inductive hypothesis) that there are $(p-1)$-qubit UPBs of any size in the set $\{p,p+4,\ldots,2^{p-1}-4,2^{p-1}\}$, and combining these UPBs via Proposition~\ref{prop:qubit_add_together} gives $p$-qubit UPBs of any size in $\{2p,2p+4,\ldots,2^p-4,2^p\}$. The cases when $p \equiv 1,2,3 \, (\text{mod } 4)$ are similar.
	
	We now focus on constructing a UPB in the $s = 2p$ case, and we will generalize this construction to smaller values of $s$ later. Define the integer $k := s/4$. To construct the orthogonality graph of the desired UPB, begin by letting the orthogonality graph on one of the parties be such that every vertex is connected to exactly one other vertex (as in the top graph of Figure~\ref{fig:4k_upb}). On each of the remaining parties, have each of these pairs of states be equal to each other.
	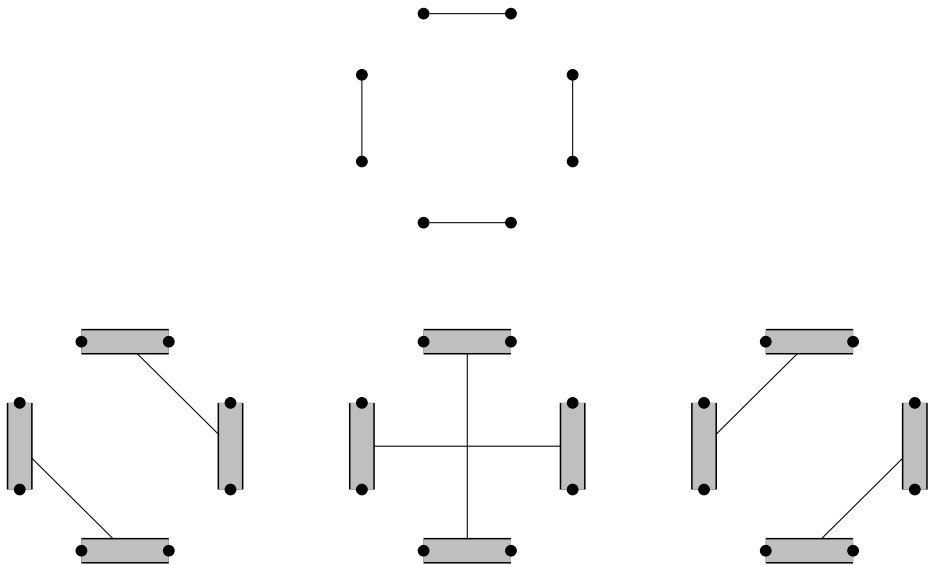
\begin{figure}[htb]
		\centering
		\begin{tikzpicture}[x=1.5cm, y=1.5cm, label distance=0cm]
			\vertex[fill] (v0) at (-0.383,0.824) []{};
			\vertex[fill] (v1) at (-0.924,0.283) []{};
			\vertex[fill] (v2) at (-0.924,-0.483) []{};
			\vertex[fill] (v3) at (-0.383,-1.024) []{};
			\vertex[fill] (v4) at (0.383,-1.024) []{};
			\vertex[fill] (v5) at (0.924,-0.483) []{};
			\vertex[fill] (v6) at (0.924,0.283) []{};
			\vertex[fill] (v7) at (0.383,0.824) []{};
			
		    \draw[draw=black] (0,-2.076) -- (0,-3.924);
		    \draw[draw=black] (-0.924,-3) -- (0.924,-3);

			\filldraw[fill=lightgray,line width=0.34cm,line join=round,draw=black] (-0.924,-2.617) -- (-0.924,-3.383) -- cycle;
		    \filldraw[fill=lightgray,line width=0.30cm,line join=round,draw=lightgray] (-0.924,-2.617) -- (-0.924,-3.383) -- cycle;
			\filldraw[fill=lightgray,line width=0.34cm,line join=round,draw=black] (-0.383,-3.924) -- (0.383,-3.924) -- cycle;
		    \filldraw[fill=lightgray,line width=0.30cm,line join=round,draw=lightgray] (-0.383,-3.924) -- (0.383,-3.924) -- cycle;
			\filldraw[fill=lightgray,line width=0.34cm,line join=round,draw=black] (0.924,-3.383) -- (0.924,-2.617) -- cycle;
		    \filldraw[fill=lightgray,line width=0.30cm,line join=round,draw=lightgray] (0.924,-3.383) -- (0.924,-2.617) -- cycle;
			\filldraw[fill=lightgray,line width=0.34cm,line join=round,draw=black] (-0.383,-2.076) -- (0.383,-2.076) -- cycle;
			\filldraw[fill=lightgray,line width=0.30cm,line join=round,draw=lightgray] (-0.383,-2.076) -- (0.383,-2.076) -- cycle;
			
			\vertex[fill] (w0) at (-0.383,-2.076) []{};
			\vertex[fill] (w1) at (-0.924,-2.617) []{};
			\vertex[fill] (w2) at (-0.924,-3.383) []{};
			\vertex[fill] (w3) at (-0.383,-3.924) []{};
			\vertex[fill] (w4) at (0.383,-3.924) []{};
			\vertex[fill] (w5) at (0.924,-3.383) []{};
			\vertex[fill] (w6) at (0.924,-2.617) []{};
			\vertex[fill] (w7) at (0.383,-2.076) []{};
				
		    \draw[draw=black] (-3,-2.076) -- (-2.076,-3);
		    \draw[draw=black] (-3.924,-3) -- (-3,-3.924);

			\filldraw[fill=lightgray,line width=0.34cm,line join=round,draw=black] (-3.924,-2.617) -- (-3.924,-3.383) -- cycle;
		    \filldraw[fill=lightgray,line width=0.30cm,line join=round,draw=lightgray] (-3.924,-2.617) -- (-3.924,-3.383) -- cycle;
			\filldraw[fill=lightgray,line width=0.34cm,line join=round,draw=black] (-3.383,-3.924) -- (-2.617,-3.924) -- cycle;
		    \filldraw[fill=lightgray,line width=0.30cm,line join=round,draw=lightgray] (-3.383,-3.924) -- (-2.617,-3.924) -- cycle;
			\filldraw[fill=lightgray,line width=0.34cm,line join=round,draw=black] (-2.076,-3.383) -- (-2.076,-2.617) -- cycle;
		    \filldraw[fill=lightgray,line width=0.30cm,line join=round,draw=lightgray] (-2.076,-3.383) -- (-2.076,-2.617) -- cycle;
			\filldraw[fill=lightgray,line width=0.34cm,line join=round,draw=black] (-3.383,-2.076) -- (-2.617,-2.076) -- cycle;
			\filldraw[fill=lightgray,line width=0.30cm,line join=round,draw=lightgray] (-3.383,-2.076) -- (-2.617,-2.076) -- cycle;
			
			\vertex[fill] (x0) at (-3.383,-2.076) []{};
			\vertex[fill] (x1) at (-3.924,-2.617) []{};
			\vertex[fill] (x2) at (-3.924,-3.383) []{};
			\vertex[fill] (x3) at (-3.383,-3.924) []{};
			\vertex[fill] (x4) at (-2.617,-3.924) []{};
			\vertex[fill] (x5) at (-2.076,-3.383) []{};
			\vertex[fill] (x6) at (-2.076,-2.617) []{};
			\vertex[fill] (x7) at (-2.617,-2.076) []{};
				
		    \draw[draw=black] (3,-2.076) -- (2.076,-3);
		    \draw[draw=black] (3.924,-3) -- (3,-3.924);

			\filldraw[fill=lightgray,line width=0.34cm,line join=round,draw=black] (3.924,-2.617) -- (3.924,-3.383) -- cycle;
		    \filldraw[fill=lightgray,line width=0.30cm,line join=round,draw=lightgray] (3.924,-2.617) -- (3.924,-3.383) -- cycle;
			\filldraw[fill=lightgray,line width=0.34cm,line join=round,draw=black] (3.383,-3.924) -- (2.617,-3.924) -- cycle;
		    \filldraw[fill=lightgray,line width=0.30cm,line join=round,draw=lightgray] (3.383,-3.924) -- (2.617,-3.924) -- cycle;
			\filldraw[fill=lightgray,line width=0.34cm,line join=round,draw=black] (2.076,-3.383) -- (2.076,-2.617) -- cycle;
		    \filldraw[fill=lightgray,line width=0.30cm,line join=round,draw=lightgray] (2.076,-3.383) -- (2.076,-2.617) -- cycle;
			\filldraw[fill=lightgray,line width=0.34cm,line join=round,draw=black] (3.383,-2.076) -- (2.617,-2.076) -- cycle;
			\filldraw[fill=lightgray,line width=0.30cm,line join=round,draw=lightgray] (3.383,-2.076) -- (2.617,-2.076) -- cycle;
			
			\vertex[fill] (y0) at (3.383,-2.076) []{};
			\vertex[fill] (y1) at (3.924,-2.617) []{};
			\vertex[fill] (y2) at (3.924,-3.383) []{};
			\vertex[fill] (y3) at (3.383,-3.924) []{};
			\vertex[fill] (y4) at (2.617,-3.924) []{};
			\vertex[fill] (y5) at (2.076,-3.383) []{};
			\vertex[fill] (y6) at (2.076,-2.617) []{};
			\vertex[fill] (y7) at (2.617,-2.076) []{};

			\path 
				(v0) edge (v7)
				(v2) edge (v1)
				(v4) edge (v3)
				(v6) edge (v5)
			;
		\end{tikzpicture}
		\caption{The orthogonality graph constructed in the proof of Theorem~\ref{thm:qubit_4k_upbs} in the $p = 4, s = 8$ case.}\label{fig:4k_upb}
	\end{figure}
	
	Since the complete graph on $2k$ vertices has a $1$-factorization \cite[Theorem~9.1]{Har69}, on each of these remaining $p - 1 = 2k - 1$ parties we can connect each pair of vertices to exactly one other pair of vertices in such a way that the union of these $p$ orthogonality graphs is the complete graph, so the corresponding product states are mutually orthogonal. The fact that this product basis is also unextendible follows easily from its construction -- any product state can be orthogonal to at most $1$ state on the first party and at most $2$ of the states on each of the remaining $p-1$ parties, for a total of $1 + 2(p-1) = 2p-1$ states. Thus there is no product state orthogonal to all $s = 2p$ members of this product basis.
	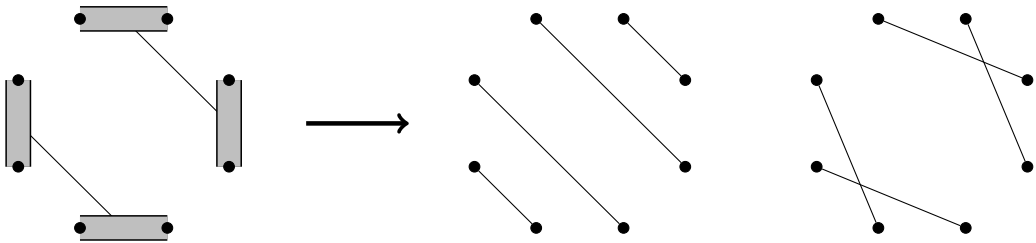
\begin{figure}[htb]
		\centering
		\begin{tikzpicture}[x=1.5cm, y=1.5cm, label distance=0cm]			
			\vertex[fill] (w0) at (0.617,-2.076) []{};
			\vertex[fill] (w1) at (0.076,-2.617) []{};
			\vertex[fill] (w2) at (0.076,-3.383) []{};
			\vertex[fill] (w3) at (0.617,-3.924) []{};
			\vertex[fill] (w4) at (1.383,-3.924) []{};
			\vertex[fill] (w5) at (1.924,-3.383) []{};
			\vertex[fill] (w6) at (1.924,-2.617) []{};
			\vertex[fill] (w7) at (1.383,-2.076) []{};

			\path 
				(w0) edge (w5)
				(w7) edge (w6)
				(w1) edge (w4)
				(w2) edge (w3)
			;
				
			\draw [->,line width=0.06cm] (-1.4,-3) -- (-0.5,-3);
				
		    \draw[draw=black] (-3,-2.076) -- (-2.076,-3);
		    \draw[draw=black] (-3.924,-3) -- (-3,-3.924);

			\filldraw[fill=lightgray,line width=0.34cm,line join=round,draw=black] (-3.924,-2.617) -- (-3.924,-3.383) -- cycle;
		    \filldraw[fill=lightgray,line width=0.30cm,line join=round,draw=lightgray] (-3.924,-2.617) -- (-3.924,-3.383) -- cycle;
			\filldraw[fill=lightgray,line width=0.34cm,line join=round,draw=black] (-3.383,-3.924) -- (-2.617,-3.924) -- cycle;
		    \filldraw[fill=lightgray,line width=0.30cm,line join=round,draw=lightgray] (-3.383,-3.924) -- (-2.617,-3.924) -- cycle;
			\filldraw[fill=lightgray,line width=0.34cm,line join=round,draw=black] (-2.076,-3.383) -- (-2.076,-2.617) -- cycle;
		    \filldraw[fill=lightgray,line width=0.30cm,line join=round,draw=lightgray] (-2.076,-3.383) -- (-2.076,-2.617) -- cycle;
			\filldraw[fill=lightgray,line width=0.34cm,line join=round,draw=black] (-3.383,-2.076) -- (-2.617,-2.076) -- cycle;
			\filldraw[fill=lightgray,line width=0.30cm,line join=round,draw=lightgray] (-3.383,-2.076) -- (-2.617,-2.076) -- cycle;
			
			\vertex[fill] (x0) at (-3.383,-2.076) []{};
			\vertex[fill] (x1) at (-3.924,-2.617) []{};
			\vertex[fill] (x2) at (-3.924,-3.383) []{};
			\vertex[fill] (x3) at (-3.383,-3.924) []{};
			\vertex[fill] (x4) at (-2.617,-3.924) []{};
			\vertex[fill] (x5) at (-2.076,-3.383) []{};
			\vertex[fill] (x6) at (-2.076,-2.617) []{};
			\vertex[fill] (x7) at (-2.617,-2.076) []{};
				
			\vertex[fill] (y0) at (4.383,-2.076) []{};
			\vertex[fill] (y1) at (4.924,-2.617) []{};
			\vertex[fill] (y2) at (4.924,-3.383) []{};
			\vertex[fill] (y3) at (4.383,-3.924) []{};
			\vertex[fill] (y4) at (3.617,-3.924) []{};
			\vertex[fill] (y5) at (3.076,-3.383) []{};
			\vertex[fill] (y6) at (3.076,-2.617) []{};
			\vertex[fill] (y7) at (3.617,-2.076) []{};

			\path 
				(y0) edge (y2)
				(y7) edge (y1)
				(y3) edge (y5)
				(y4) edge (y6)
			;
		\end{tikzpicture}
		\caption{An example of how to ``split'' one of the parties in the orthogonality graph of Figure~\ref{fig:4k_upb} into two while preserving unextendibility.}\label{fig:split_og}
	\end{figure}

	To generalize this construction to the $p+1 \leq s < 2p$ case, we modify some of the last $p-1$ parties in the $s = 2p$ construction above by ``splitting'' one party into two. To ``split'' a qubit, replace each pair of orthogonal states of the form $\{\ket{a},\ket{a},\ket{\overline{a}},\ket{\overline{a}}\}$ with the two-qubit states $\{\ket{aa},\ket{bb},\ket{\overline{ab}},\ket{\overline{ba}}\}$ (see Figure~\ref{fig:split_og}). This procedure is easily-verified to preserve unextendibility and orthogonality, so the resulting set of product states is a UPB.

	Furthermore, since this procedure keeps $s$ the same but increases the number of parties by $1$, and we can split anywhere from $1$ up to $p-1 = s/2 - 1$ orthogonality graphs in this way, we can construct an unextendible product basis of $s$ states for any number of parties $p$ from $s/2$ up to $s-1$, as desired (see Figure~\ref{fig:split_og_ex}).
	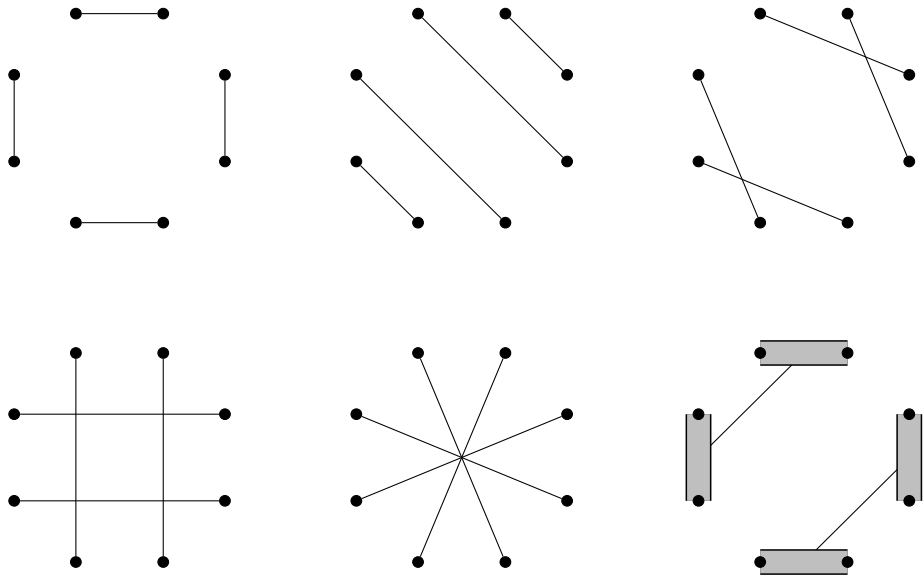
\begin{figure}[htb]
		\centering
		\begin{tikzpicture}[x=1.5cm, y=1.5cm, label distance=0cm]
			\vertex[fill] (v0) at (-3.383,0.824) []{};
			\vertex[fill] (v1) at (-3.924,0.283) []{};
			\vertex[fill] (v2) at (-3.924,-0.483) []{};
			\vertex[fill] (v3) at (-3.383,-1.024) []{};
			\vertex[fill] (v4) at (-2.617,-1.024) []{};
			\vertex[fill] (v5) at (-2.076,-0.483) []{};
			\vertex[fill] (v6) at (-2.076,0.283) []{};
			\vertex[fill] (v7) at (-2.617,0.824) []{};			

			\path 
				(v0) edge (v3)
				(v7) edge (v4)
				(v1) edge (v6)
				(v2) edge (v5)
			;

			\vertex[fill] (w0) at (-0.383,0.824) []{};
			\vertex[fill] (w1) at (-0.924,0.283) []{};
			\vertex[fill] (w2) at (-0.924,-0.483) []{};
			\vertex[fill] (w3) at (-0.383,-1.024) []{};
			\vertex[fill] (w4) at (0.383,-1.024) []{};
			\vertex[fill] (w5) at (0.924,-0.483) []{};
			\vertex[fill] (w6) at (0.924,0.283) []{};
			\vertex[fill] (w7) at (0.383,0.824) []{};

			\path 
				(w0) edge (w4)
				(w1) edge (w5)
				(w2) edge (w6)
				(w7) edge (w3)
			;

		    \draw[draw=black] (3,0.824) -- (2.076,-0.1);
		    \draw[draw=black] (3.924,-0.1) -- (3,-1.024);

			\filldraw[fill=lightgray,line width=0.34cm,line join=round,draw=black] (3.924,0.283) -- (3.924,-0.483) -- cycle;
		    \filldraw[fill=lightgray,line width=0.30cm,line join=round,draw=lightgray] (3.924,0.283) -- (3.924,-0.483) -- cycle;
			\filldraw[fill=lightgray,line width=0.34cm,line join=round,draw=black] (3.383,-1.024) -- (2.617,-1.024) -- cycle;
		    \filldraw[fill=lightgray,line width=0.30cm,line join=round,draw=lightgray] (3.383,-1.024) -- (2.617,-1.024) -- cycle;
			\filldraw[fill=lightgray,line width=0.34cm,line join=round,draw=black] (2.076,-0.483) -- (2.076,0.283) -- cycle;
		    \filldraw[fill=lightgray,line width=0.30cm,line join=round,draw=lightgray] (2.076,-0.483) -- (2.076,0.283) -- cycle;
			\filldraw[fill=lightgray,line width=0.34cm,line join=round,draw=black] (3.383,0.824) -- (2.617,0.824) -- cycle;
			\filldraw[fill=lightgray,line width=0.30cm,line join=round,draw=lightgray] (3.383,0.824) -- (2.617,0.824) -- cycle;
			
			\vertex[fill] (y0) at (3.383,0.824) []{};
			\vertex[fill] (y1) at (3.924,0.283) []{};
			\vertex[fill] (y2) at (3.924,-0.483) []{};
			\vertex[fill] (y3) at (3.383,-1.024) []{};
			\vertex[fill] (y4) at (2.617,-1.024) []{};
			\vertex[fill] (y5) at (2.076,-0.483) []{};
			\vertex[fill] (y6) at (2.076,0.283) []{};
			\vertex[fill] (y7) at (2.617,0.824) []{};

			\vertex[fill] (v20) at (-3.383,3.824) []{};
			\vertex[fill] (v21) at (-3.924,3.283) []{};
			\vertex[fill] (v22) at (-3.924,2.517) []{};
			\vertex[fill] (v23) at (-3.383,1.976) []{};
			\vertex[fill] (v24) at (-2.617,1.976) []{};
			\vertex[fill] (v25) at (-2.076,2.517) []{};
			\vertex[fill] (v26) at (-2.076,3.283) []{};
			\vertex[fill] (v27) at (-2.617,3.824) []{};			

			\path 
				(v20) edge (v27)
				(v22) edge (v21)
				(v24) edge (v23)
				(v26) edge (v25)
			;

			\vertex[fill] (w20) at (-0.383,3.824) []{};
			\vertex[fill] (w21) at (-0.924,3.283) []{};
			\vertex[fill] (w22) at (-0.924,2.517) []{};
			\vertex[fill] (w23) at (-0.383,1.976) []{};
			\vertex[fill] (w24) at (0.383,1.976) []{};
			\vertex[fill] (w25) at (0.924,2.517) []{};
			\vertex[fill] (w26) at (0.924,3.283) []{};
			\vertex[fill] (w27) at (0.383,3.824) []{};

			\path 
				(w20) edge (w25)
				(w27) edge (w26)
				(w21) edge (w24)
				(w22) edge (w23)
			;

			\vertex[fill] (y20) at (3.383,3.824) []{};
			\vertex[fill] (y21) at (3.924,3.283) []{};
			\vertex[fill] (y22) at (3.924,2.517) []{};
			\vertex[fill] (y23) at (3.383,1.976) []{};
			\vertex[fill] (y24) at (2.617,1.976) []{};
			\vertex[fill] (y25) at (2.076,2.517) []{};
			\vertex[fill] (y26) at (2.076,3.283) []{};
			\vertex[fill] (y27) at (2.617,3.824) []{};

			\path 
				(y20) edge (y22)
				(y27) edge (y21)
				(y23) edge (y25)
				(y24) edge (y26)
			;
		\end{tikzpicture}
		\caption{An orthogonality graph of a UPB in the $p = 6, s = 8$ case of Theorem~\ref{thm:qubit_4k_upbs}, constructed by ``splitting'' the bottom-left and bottom-center qubits of the orthogonality graph in Figure~\ref{fig:4k_upb}. Similarly, splitting only $1$ qubit results in the UPB in the $p = 5, s = 8$ case presented in Section~\ref{section:5qubit}, while splitting $3$ parties would result in a UPB in the $p = 7, s = 8$ case.}\label{fig:split_og_ex}
	\end{figure}
\end{proof}

Note that Theorem~\ref{thm:qubit_4k_upbs} says nothing about the existence or nonexistence of qubit UPBs in the case when $p \equiv 1 \, (\text{mod } 4)$ and $s = 2p + 2$. The smallest such case is when $p = 5$ and $s = 12$, and in fact a UPB \emph{does} exist in this case, simply by combining two copies of the $4$-qubit UPB of size $6$. This leaves the $p = 9, s = 20$ case as the smallest case where the existence of a qubit UPB whose size is a multiple of $4$ is unknown.

\section{Conclusions and Outlook}\label{section:conclusions}

We have investigated the structure of qubit unextendible product bases by completely characterizing them in the $4$-qubit case and deriving many new results concerning the (non)existence of qubit UPBs of given sizes. This work has many immediate applications and consequences.

For example, there is a long history of trying to determine the possible ranks of bound entangled states \cite{HSTT03,Cla06,Ha07,KO12,CD13}, however very little is known about this question in the multipartite case. By using the standard method of creating a bound entangled state from a UPB (i.e., if $\{\ket{v_1},\ldots,\ket{v_s}\}$ is a UPB then $\rho := I - \sum_{i=1}^s \ketbra{v_i}{v_i}$ is a multiple of a bound entangled state), we can use our results to construct $p$-qubit bound entangled states of many different ranks. For example, Theorem~\ref{thm:many_qubit_sizes} immediately implies that (for $p \geq 7$) there exist $p$-qubit bound entangled states of every rank from $6$ through $2^p - \frac{p^2 + 3p - 30}{2}$, inclusive.

It is also known that some qubit UPBs can be used to construct locally indistinguishable subspaces of the same size \cite{DXY10}. It is unknown whether or not \emph{all} qubit UPBs span a locally indistinguishable subspace, so it might be the case that the hundreds of new UPBs found in this work contain a counter-example. Alternatively, if it turns out that qubit UPBs do always span a locally indistinguishable subspace, it would follow that there exist such subspaces of all sizes found in this work.

However, some notable open problems about qubit UPBs remain:
\begin{enumerate}
	\item Does there exist a $p$-qubit UPB of size $2^p - 5$ when $p \geq 5$? Our computer search showed that the answer is ``no'' when $p = 4$, but we still do not know of a simple reason for why this is the case. It is worth noting that if the answer is ``yes'' for any particular value of $p$ then it must be ``yes'' for all larger values of $p$ as well, by Proposition~\ref{prop:qubit_add_together}. Closely related to this question is whether or not there exist $p$-qubit bound entangled states of rank $5$.
	
	\item Does there exist a $p$-qubit UPB of size $2p+2$ when $p \equiv 1 \, (\text{mod } 4)$? That is, can we fill in the hole in Theorem~\ref{thm:qubit_4k_upbs}?
	
	\item What is the true maximum ``gap size'' $g_p$ as described in Section~\ref{section:manyqubit}? The best bounds that we have so far are $g_p \geq 1$ when $p \geq 5$ is odd, $g_p \leq 3$ when $p \not\equiv 1 \, (\text{mod } 4)$, and $g_p \leq 7$ always.
	
	\item There are also many other cases where the existence of an $s$-state $p$-qubit UPB is unknown, since very little is known about the existence of such UPBs when $f(p) < s < (p^2 + 3p - 30)/2$ and $s$ is not a multiple of $4$ (see Table~\ref{tab:qubit_summary}).
\end{enumerate}
\begin{table}\vspace*{-0.4in}
\begin{center}
	\def\arraystretch{1.05}
    \begin{tabular}{@{}c c c c c c c c}
    \toprule
     & \multicolumn{7}{c}{number of qubits} \\  \cmidrule{2-8}
    size & $1$ & $2$ & $3$ & $4$ & $5$ & $6$ & $7$ \\ \midrule
	1 & \cellcolor{redback} & \cellcolor{redback} & \cellcolor{redback} & \cellcolor{redback} & \cellcolor{redback} & \cellcolor{redback} & \cellcolor{redback} \\
	2 & \cellcolor{greenback}\cmark & \cellcolor{redback} & \cellcolor{redback} & \cellcolor{redback} & \cellcolor{redback} & \cellcolor{redback} & \cellcolor{redback} \\
	3 & \cellcolor{redback} & \cellcolor{redback} & \cellcolor{redback} & \cellcolor{redback} & \cellcolor{redback} & \cellcolor{redback} & \cellcolor{redback} \\
	4 & \cellcolor{redback} & \cellcolor{greenback}\cmark & \cellcolor{greenback}\cmark & \cellcolor{redback} & \cellcolor{redback} & \cellcolor{redback} & \cellcolor{redback} \\
	5 & \cellcolor{redback} & \cellcolor{redback} & \cellcolor{redback} & \cellcolor{redback} & \cellcolor{redback} & \cellcolor{redback} & \cellcolor{redback} \\
	6 & \cellcolor{redback} & \cellcolor{redback} & \cellcolor{redback} & \cellcolor{greenback}\cmark & \cellcolor{greenback}\cmark & \cellcolor{redback} & \cellcolor{redback} \\
	7 & \cellcolor{redback} & \cellcolor{redback} & \cellcolor{redback} & \cellcolor{greenback}\cmark & \cellcolor{redback} & \cellcolor{redback} & \cellcolor{redback} \\
	8 & \cellcolor{redback} & \cellcolor{redback} & \cellcolor{greenback}\cmark & \cellcolor{greenback}\cmark & \cellcolor{greenback}\cmark & \cellcolor{greenback}\cmark & \cellcolor{greenback}\cmark \\
	9 & \cellcolor{redback} & \cellcolor{redback} & \cellcolor{redback} & \cellcolor{greenback}\cmark & \cellcolor{greenback}\cmark & \cellcolor{greenback}\cmark & \cellcolor{redback} \\
    10 & \cellcolor{redback} & \cellcolor{redback} & \cellcolor{redback} & \cellcolor{greenback}\cmark & \cellcolor{greenback}\cmark & \cellcolor{yellowback}? & \cellcolor{yellowback}? \\
    11 & \cellcolor{redback} & \cellcolor{redback} & \cellcolor{redback} & \cellcolor{redback} & \cellcolor{yellowback}? & \cellcolor{yellowback}? & \cellcolor{yellowback}? \\
    12 & \cellcolor{redback} & \cellcolor{redback} & \cellcolor{redback} & \cellcolor{greenback}\cmark & \cellcolor{greenback}\cmark & \cellcolor{greenback}\cmark & \cellcolor{greenback}\cmark \\
    13 & \cellcolor{redback} & \cellcolor{redback} & \cellcolor{redback} & \cellcolor{redback} & \cellcolor{greenback}\cmark & \cellcolor{yellowback}? & \cellcolor{yellowback}? \\
    14 & \cellcolor{redback} & \cellcolor{redback} & \cellcolor{redback} & \cellcolor{redback} & \cellcolor{greenback}\cmark & \cellcolor{greenback}\cmark & \cellcolor{yellowback}? \\
    15 & \cellcolor{redback} & \cellcolor{redback} & \cellcolor{redback} & \cellcolor{redback} & \cellcolor{greenback}\cmark & \cellcolor{greenback}\cmark & \cellcolor{yellowback}? \\
    16 & \cellcolor{redback} & \cellcolor{redback} & \cellcolor{redback} & \cellcolor{greenback}\cmark & \cellcolor{greenback}\cmark & \cellcolor{greenback}\cmark & \cellcolor{greenback}\cmark \\
    17--18 & \cellcolor{redback} & \cellcolor{redback} & \cellcolor{redback} & \cellcolor{redback} & \cellcolor{greenback}\cmark & \cellcolor{greenback}\cmark & \cellcolor{greenback}\cmark \\
    19 & \cellcolor{redback} & \cellcolor{redback} & \cellcolor{redback} & \cellcolor{redback} & \cellcolor{greenback}\cmark & \cellcolor{greenback}\cmark & \cellcolor{yellowback}? \\
    20--26 & \cellcolor{redback} & \cellcolor{redback} & \cellcolor{redback} & \cellcolor{redback} & \cellcolor{greenback}\cmark & \cellcolor{greenback}\cmark & \cellcolor{greenback}\cmark \\
    27 & \cellcolor{redback} & \cellcolor{redback} & \cellcolor{redback} & \cellcolor{redback} & \cellcolor{yellowback}? & \cellcolor{greenback}\cmark & \cellcolor{greenback}\cmark \\
    28 & \cellcolor{redback} & \cellcolor{redback} & \cellcolor{redback} & \cellcolor{redback} & \cellcolor{greenback}\cmark & \cellcolor{greenback}\cmark & \cellcolor{greenback}\cmark \\
    29--31 & \cellcolor{redback} & \cellcolor{redback} & \cellcolor{redback} & \cellcolor{redback} & \cellcolor{redback} & \cellcolor{greenback}\cmark & \cellcolor{greenback}\cmark \\
    32 & \cellcolor{redback} & \cellcolor{redback} & \cellcolor{redback} & \cellcolor{redback} & \cellcolor{greenback}\cmark & \cellcolor{greenback}\cmark & \cellcolor{greenback}\cmark \\
    33--58 & \cellcolor{redback} & \cellcolor{redback} & \cellcolor{redback} & \cellcolor{redback} & \cellcolor{redback} & \cellcolor{greenback}\cmark & \cellcolor{greenback}\cmark \\
    59 & \cellcolor{redback} & \cellcolor{redback} & \cellcolor{redback} & \cellcolor{redback} & \cellcolor{redback} & \cellcolor{yellowback}? & \cellcolor{greenback}\cmark \\
    60 & \cellcolor{redback} & \cellcolor{redback} & \cellcolor{redback} & \cellcolor{redback} & \cellcolor{redback} & \cellcolor{greenback}\cmark & \cellcolor{greenback}\cmark \\
    61--63 & \cellcolor{redback} & \cellcolor{redback} & \cellcolor{redback} & \cellcolor{redback} & \cellcolor{redback} & \cellcolor{redback} & \cellcolor{greenback}\cmark \\
    64 & \cellcolor{redback} & \cellcolor{redback} & \cellcolor{redback} & \cellcolor{redback} & \cellcolor{redback} & \cellcolor{greenback}\cmark & \cellcolor{greenback}\cmark \\
    65--122 & \cellcolor{redback} & \cellcolor{redback} & \cellcolor{redback} & \cellcolor{redback} & \cellcolor{redback} & \cellcolor{redback} & \cellcolor{greenback}\cmark \\
    123 & \cellcolor{redback} & \cellcolor{redback} & \cellcolor{redback} & \cellcolor{redback} & \cellcolor{redback} & \cellcolor{redback} & \cellcolor{yellowback}? \\
    124 & \cellcolor{redback} & \cellcolor{redback} & \cellcolor{redback} & \cellcolor{redback} & \cellcolor{redback} & \cellcolor{redback} & \cellcolor{greenback}\cmark \\
    125--127 & \cellcolor{redback} & \cellcolor{redback} & \cellcolor{redback} & \cellcolor{redback} & \cellcolor{redback} & \cellcolor{redback} & \cellcolor{redback} \\
    128 & \cellcolor{redback} & \cellcolor{redback} & \cellcolor{redback} & \cellcolor{redback} & \cellcolor{redback} & \cellcolor{redback} & \cellcolor{greenback}\cmark \\\bottomrule
    \end{tabular}
    \caption{(color online) A summary of what sizes of UPBs are possible on small numbers of qubits. Empty \colorbox{redback}{red} cells indicate that no UPB of that size is possible on that number of qubits, while \colorbox{greenback}{green} checkmarks (\cmark) indicate that an explicit UPB of that size is known. \colorbox{yellowback}{Yellow} question marks (?) indicate that it is currently unknown whether or not such a UPB exists.}\label{tab:qubit_summary}
\end{center}
\end{table}

\vspace{0.1in} \noindent{\bf Acknowledgements.} The author thanks Remigiusz Augusiak and Jianxin Chen for helpful conversations related to this work. The author was supported by the Natural Sciences and Engineering Research Council of Canada. 

\bibliographystyle{alpha}
\bibliography{quantum}

\section*{Appendix A: Details of the Computation}\label{section:computation}

Since it is not feasible to find all $4$-qubit UPBs via naive brute force search, our search is split into three steps so that most of the work can be done in parallel. The code used to carry out the search can be downloaded from \cite{JohUPBCode14}. The code that does the time-consuming parts of the computation is written in C, and then some post-processing is done in Maple.

\subsection*{Step 1: Finding Potential Bipartite Graph Decompositions}\label{section:computation_step1}

Recall from Section~\ref{section:orthog_graphs} that the orthogonality graph of each qubit of a UPB can decomposed as the disjoint union of complete bipartite graphs. The first step is to try to find sizes of complete bipartite graphs that could possibly correspond to UPBs, without considering the actual placement of those complete bipartite graphs within each qubit.

That is, we first search for positive integers $n_1,\ldots,n_p$ and $\{a^{(i)}_1,b^{(i)}_1,\ldots,a^{(i)}_{n_i},b^{(i)}_{n_i}\}_{i=1}^{p}$ such that there could possibly be a UPB with the property that the orthogonality graph of its $i$-th qubit ($1 \leq i \leq p$) is the disjoint union of $K_{a^{(i)}_1,b^{(i)}_1},\ldots,K_{a^{(i)}_{n_i},b^{(i)}_{n_i}}$, where $n_i$ is the total number of complete bipartite graphs present in the orthogonality graph of the $i$-th qubit.

There are many simple necessary conditions that the $a^{(i)}_j$'s and $b^{(i)}_j$'s must satisfy in order for there to be a corresponding UPB. For example, we clearly must have $\sum_{j=1}^{n_i}(a^{(i)}_j+b^{(i)}_j) = s$ for all $i$, since there must be exactly $s$ vertices in each qubit's orthogonality graph. Also, it is straightforward to see that $a^{(i)}_j,b^{(i)}_j \leq s - p$ for all $i,j$, since otherwise we could find a product state that is orthogonal to $s-p+1$ states of the UPB on one qubit and $1$ state of the UPB on each of the remaining qubits, for a total of $(s-p+1) + (p-1) = s$ states, which implies extendibility. Furthermore, since $K_{a^{(i)}_j,b^{(i)}_j}$ has $a^{(i)}_j b^{(i)}_j$ edges and we need at least $s(s-1)/2$ edges in order for the corresponding states to be mutually orthogonal, we must have $\sum_{i,j}a^{(i)}_j b^{(i)}_j \geq s(s-1)/2$.

There are also some less obvious restrictions that we can place on our search space, as described by the following lemmas.
\begin{lemma}\label{lem:reverse_combine}
	Suppose that the constants $\{a^{(i)}_1,b^{(i)}_1,\ldots,a^{(i)}_{n_i},b^{(i)}_{n_i}\}_{i=1}^{p}$ (defined above) correspond to a $p$-qubit UPB. There exists a particular $i$ such that $n_i = 1$ if and only if there exist $(p-1)$-qubit UPBs of size $a^{(i)}_{1}$ and $b^{(i)}_{1}$.
\end{lemma}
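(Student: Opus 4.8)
The plan is to recognize that the condition $n_i = 1$ means precisely that the orthogonality graph of qubit $i$ consists of a single complete bipartite graph $K_{a^{(i)}_1, b^{(i)}_1}$, so that (after an irrelevant change of basis on qubit $i$) every state of the UPB takes one of exactly two orthogonal values on that qubit. This splits the UPB $\mathcal{S}$ into two disjoint sets $A$ and $B$, where $A$ consists of the $a^{(i)}_1$ states equal to $\ket{0}$ on qubit $i$ and $B$ consists of the $b^{(i)}_1$ states equal to $\ket{1}$ on qubit $i$, with $a^{(i)}_1 + b^{(i)}_1 = s$. The whole argument then amounts to showing that this decomposition is exactly the one produced, in reverse, by Proposition~\ref{prop:qubit_add_together}.

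For the forward direction, I would delete qubit $i$ from every state and argue that the restricted sets $A$ and $B$ are themselves $(p-1)$-qubit UPBs. Orthogonality is easy: any two states of $A$ are equal on qubit $i$, hence (being orthogonal in $\mathcal{S}$) must already be orthogonal on one of the other $p-1$ qubits, so their restrictions remain orthogonal, and likewise for $B$. The one genuinely substantive point is unextendibility, which I would prove by contradiction: if some $(p-1)$-qubit product state $\ket{z}$ were orthogonal to every restricted state of $A$, then reinserting $\ket{0}$ into the $i$-th slot yields a $p$-qubit product state $\ket{z'}$ orthogonal to all of $A$ (on the remaining qubits) and to all of $B$ (since $B$ is $\ket{1}$ on qubit $i$ while $\ket{z'}$ is $\ket{0}$ there). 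Such a $\ket{z'}$ is orthogonal to every member of $\mathcal{S}$ and cannot itself lie in $\mathcal{S}$, contradicting property~(c). This establishes that $(p-1)$-qubit UPBs of sizes $a^{(i)}_1$ and $b^{(i)}_1$ exist.

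For the converse direction, I would simply invoke Proposition~\ref{prop:qubit_add_together}: given $(p-1)$-qubit UPBs of sizes $a^{(i)}_1$ and $b^{(i)}_1$, its construction tensors the first with $\ket{0}$ and the second with $\ket{1}$ on a new qubit, producing a $p$-qubit UPB whose orthogonality graph on that new qubit is the single complete bipartite graph $K_{a^{(i)}_1, b^{(i)}_1}$, i.e.\ a UPB realizing $n_i = 1$ with exactly these part sizes. Thus the two directions are inverse to each other: the forward direction disassembles a UPB with a degenerate qubit, and the converse reassembles it, so the decompositions with some $n_i = 1$ are precisely those obtainable by combining two smaller UPBs.

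The main obstacle I anticipate is less a technical difficulty than a matter of careful framing. The cleanest content is the forward implication, and its only delicate step is the unextendibility argument, where one must verify that lifting the extending vector $\ket{z}$ back to $p$ qubits (by tensoring with the appropriate single-qubit state on qubit $i$) genuinely produces a vector orthogonal to the entire original UPB and distinct from all of its members. I would also take care to state the biconditional so that the index $i$ and the part sizes $a^{(i)}_1, b^{(i)}_1$ are consistently tied to the same degenerate qubit, since this is exactly the bookkeeping that the search in Step~1 relies on in order to avoid re-enumerating the decomposable cases.
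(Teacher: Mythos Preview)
Your proposal is correct and follows essentially the same approach as the paper. The paper handles the ``if'' direction exactly as you do (by invoking Proposition~\ref{prop:qubit_add_together}), and for the ``only if'' direction it simply asserts that ``it is straightforward to check'' that the two restricted sets are $(p-1)$-qubit UPBs; you have filled in precisely those straightforward details (orthogonality via equality on qubit $i$, unextendibility via the lifting-by-$\ket{0}$ contradiction), so your write-up is in fact more complete than the paper's own proof.
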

\begin{proof}
	The ``if'' direction of this lemma is actually just a rewording of Proposition~\ref{prop:qubit_add_together}. If there are $(p-1)$-qubit UPBs of size $a^{(i)}_{1}$ and $b^{(i)}_{1}$ then the method of construction given in the proof of Proposition~\ref{prop:qubit_add_together} gives a $p$-qubit UPB with $n_p = 1$ (i.e., only one basis of $\mathbb{C}^2$ is used on the $p$-th qubit).
	
	For the ``only if'' direction of the proof, suppose for a contradiction that $n_i = 1$ for some $i$, which means that there exists a basis $\{\ket{a},\ket{\overline{a}}\}$ of $\mathbb{C}^2$ such that $a^{(i)}_{1}$ of the states in the $p$-qubit UPB are equal to $\ket{a}$ on the $i$-th qubit, and the remaining $b^{(i)}_{1}$ states are equal to $\ket{\overline{a}}$ on the $i$-th qubit. It is straightforward to check that those sets of $a^{(i)}_{1}$ and  $b^{(i)}_{1}$ states form $(p-1)$-qubit UPBs if we remove their $i$-th qubits.
\end{proof}

\begin{lemma}\label{lem:search_reduce}
	Suppose that the constants $\{a^{(i)}_1,b^{(i)}_1,\ldots,a^{(i)}_{n_i},b^{(i)}_{n_i}\}_{i=1}^{p}$ (defined above) correspond to a UPB. Fix any permutation $\sigma : \{1,\ldots,p\} \rightarrow \{1,\ldots,p\}$, let $t_1$ be any of the $a^{(\sigma(1))}_j$'s or $b^{(\sigma(1))}_j$'s, and define the constants $t_2,\ldots,t_p$ recursively as follows:
	\begin{align*}
		t_k & := \min_{c_j,d_j} \Big\{ \max_{j} \Big\{ a^{(\sigma(k))}_j - c_j, b^{(\sigma(k))}_j - d_j : c_j,d_j \geq 0 \text{ are integers}, \sum_j c_j + \sum_j d_j = t_{k-1}\Big\} \Big\}.
	\end{align*}
	Then $\sum_{i=1}^p t_i \leq s - 1$.
\end{lemma}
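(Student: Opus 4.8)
The plan is to prove the contrapositive: I will show that if $\sum_{i=1}^p t_i \geq s$, then one can build a product state orthogonal to every member of the UPB, contradicting the unextendibility condition~(c). Thus for a genuine UPB we must have $\sum_{i=1}^p t_i \leq s-1$. The engine of the construction is the single-qubit observation from Section~\ref{section:orthog_graphs}: on any qubit $i$ the orthogonality graph is a disjoint union of complete bipartite graphs, so a choice of single-qubit vector $\ket{z_i}$ can be orthogonal to the UPB states lying in at most one \emph{part} (one side of one bipartite component) of that qubit, and for any part we wish to select there is a vector $\ket{z_i}$ realizing exactly that choice. Covering a part of size $m$ therefore makes $\ket{z_i}$ orthogonal to those $m$ states on qubit $i$.

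First I would process the qubits in the order $\sigma(1),\ldots,\sigma(p)$, choosing $\ket{z_{\sigma(k)}}$ one qubit at a time and greedily covering a part that maximizes the number of \emph{newly} covered states. The base step selects any part of qubit $\sigma(1)$, of size $t_1$, exactly mirroring the elementary bound $a^{(i)}_j \leq s-p$ proved earlier in this section. For the inductive step I would maintain the invariant that after processing $k$ qubits at least $t_1+\cdots+t_k$ distinct UPB states have been covered. On qubit $\sigma(k)$ the states carried over from the previous step are distributed among the parts: if $c_j$ (resp.\ $d_j$) of them land in the part of size $a^{(\sigma(k))}_j$ (resp.\ $b^{(\sigma(k))}_j$), then selecting that part covers $a^{(\sigma(k))}_j - c_j$ (resp.\ $b^{(\sigma(k))}_j - d_j$) fresh states, and we may pick the part maximizing this. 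Since we cannot control how earlier choices scattered those states, the \emph{guaranteed} gain is the worst case over admissible distributions subject to $\sum_j c_j + \sum_j d_j = t_{k-1}$, which is precisely $t_k = \min_{c_j,d_j}\max_j\{a^{(\sigma(k))}_j - c_j,\ b^{(\sigma(k))}_j - d_j\}$. Carrying the invariant to $k=p$ shows that at least $\sum_{i=1}^p t_i$ states are covered, so $\sum_{i=1}^p t_i \geq s$ would cover all $s$ and violate~(c).

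The hard part will be the overlap bookkeeping: making precise the sense in which only the $t_{k-1}$ states carried from qubit $\sigma(k-1)$---rather than the full running total $t_1+\cdots+t_{k-1}$ of covered states---need be charged against the part chosen on qubit $\sigma(k)$, so that the fresh states counted at distinct steps are genuinely disjoint and their sizes sum to $\sum_i t_i$. This is essential, since charging the whole cumulative total would shrink each $t_k$ and weaken the bound. I would handle it by tracking a single chain of chosen parts and arguing that the part selected on qubit $\sigma(k)$ can be taken disjoint from every part chosen before step $k-1$, leaving only the immediately preceding part to subtract; the worst-case placement of those $t_{k-1}$ states is then a standard ``water-filling'' argument (loading the largest parts first), which is exactly what the inner minimization computes. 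Once this disjointness is secured, the conclusion $\sum_{i=1}^p t_i \leq s-1$ follows immediately from unextendibility, with the base case reducing to the elementary covering argument already used to establish $a^{(i)}_j \leq s-p$.
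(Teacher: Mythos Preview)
Your overall strategy---greedily pick one part per qubit so that the resulting product state is orthogonal to at least $\sum_i t_i$ members of the set, then invoke unextendibility---is precisely the paper's approach; its entire proof is the single sentence ``we can find a product state that is orthogonal to $t_1$ members of the UPB on party $\sigma(1)$, $t_2$ more members of the UPB on qubit $\sigma(2)$, and so on.''

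The gap lies exactly where you flagged it, and your proposed resolution does not work. You claim the part chosen on qubit $\sigma(k)$ can always be taken disjoint from every part chosen at steps $1,\ldots,k-2$, so that only the $t_{k-1}$ states from step $k-1$ need be subtracted. But take the $3$-qubit standard basis: $s = 8$, each qubit's orthogonality graph is a single $K_{4,4}$, and with $t_1 = 4$ the recursion as literally written gives $t_2 = 2$ and $t_3 = 3$, hence $\sum_i t_i = 9 > 7 = s-1$. Yet every size-$4$ part on qubit~$3$ meets every size-$4$ part on qubit~$1$ in exactly two states, so the disjointness you require is impossible---and indeed any product state is orthogonal to at most $7$ of the $8$ basis vectors here, so no chain of choices can cover all $s$. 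The paper's one-line argument does not engage this bookkeeping either; the greedy covering you describe in your second paragraph becomes sound if the recursion charges the cumulative total $\sum_{i<k} t_i$ rather than just $t_{k-1}$ (one then uses that the water-filling value plus the number of charges is nondecreasing), in which case no chain refinement is needed.
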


Before proving Lemma~\ref{lem:search_reduce}, we note that it is actually slightly more intuitive than it appears at first glance, as it is just a generalization of the fact that $a^{(i)}_j,b^{(i)}_j \leq s - p$ for all $i,j$ that takes into account how large the $a^{(i)}_j$'s and $b^{(i)}_j$'s are on more than one party. For example, the lemma says that there can not be an $8$-state UPB on $5$ qubits such that the orthogonality graphs of its first two qubits each decompose as $K_{3,3} \cup K_{1,1}$, since we could then choose $t_1 = 3$, which gives $t_2 = 2$ and $t_3 = t_4 = t_5 = 1$ and we have $\sum_{i=1}^p t_i = 8 > s-1 = 7$.

We can come to the same conclusion in a more intuitive manner by noting that we can always find a product state orthogonal to $3$ states on the first qubit and at least $2$ more states on the second qubit (and of course $1$ state on each of the remaining qubits), since the groups of $3$ equal states on the first two qubits can not overlap ``too much''.
\begin{proof}[Proof of Lemma~\ref{lem:search_reduce}]
	The result follows from simply observing that we can find a product state that is orthogonal to $t_1$ members of the UPB on party $\sigma(1)$, $t_2$ more members of the UPB on qubit $\sigma(2)$, and so on. Thus unextendibility implies that $\sum_{i=1}^p t_i \leq s - 1$.
\end{proof}

By making use of Lemmas~\ref{lem:reverse_combine} and~\ref{lem:search_reduce}, as well as the other basic restrictions mentioned earlier, we are able to perform a brute-force search that gives a list of potential values of the $a^{(i)}_j$'s and $b^{(i)}_j$'s. However, many of these results do not actually lead to UPBs. For example, in the case of $11$-state $4$-qubit UPBs, the above restrictions give a list of $14449$ possible values for the $a^{(i)}_j$'s and $b^{(i)}_j$'s, such as the following:
\begin{align*}
	\text{Qubit 1:} & \ K_{4,3} \cup K_{1,1} \cup K_{1,1} \\
	\text{Qubit 2:} & \ K_{4,1} \cup K_{3,3} \\
	\text{Qubit 3:} & \ K_{3,3} \cup K_{3,2} \\
	\text{Qubit 4:} & \ K_{3,3} \cup K_{3,2}.
\end{align*}
However, it turns out that there are not actually any UPBs whose orthogonality graph has such a decomposition (nor any of the other $14448$ potential decompositions), which is determined in the next step of the computation.

\subsection*{Step 2: Checking Each Decomposition}\label{section:computation_step2}

The second step in the computation is much more time-consuming---it consists of checking each of the decompositions found in the first step to see if there are actually any UPBs with such a decomposition. However, searching each of these different decompositions can be done in parallel, which greatly speeds up the process. This part of the search is done via standard brute force and is fairly straightforward.

\subsection*{Step 3: Sorting the Results}\label{section:computation_step3}

The third (and final) step in the computation is to sort the UPBs into equivalence classes based on their orthogonality graphs. This step is necessary because many of the UPBs found in step 2 are actually equivalent to each other (i.e., they are the same up to relabeling local bases and permuting states and qubits). This step is quick enough that it is also done by fairly standard brute force: for each UPB found in step 2, all possible relabelings and permutations of the UPB are generated and checked against all other UPBs found in step 2. If a match is found, then one of the two UPBs is discarded, since they are equivalent. Note that no particular preference is given to \emph{which} UPB is discarded.

\section*{Appendix B: Comparison of the Lower Bounds of Theorem~\ref{thm:many_qubit_sizes}}\label{section:bound_comparison}

Recall from the discussion surrounding Theorem~\ref{thm:many_qubit_sizes} that we claimed that $\sum_{k=4}^{p-1}f(k)$ is a lower bound of $(p^2+3p-30)/2$, and furthermore than these two quantities never differ by more than $2$. We now prove these claims explicitly.
\begin{prop}\label{prop:fk_sum}
	Let $p \geq 7$ be an integer. Then
	\begin{align*}
		\sum_{k=4}^{p-1}f(k) \leq \frac{p^2+3p-30}{2} \leq \sum_{k=4}^{p-1}f(k) + 2.
	\end{align*}
\end{prop}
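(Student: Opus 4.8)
The plan is to evaluate the sum $\sum_{k=4}^{p-1}f(k)$ explicitly by grouping the terms according to the four cases in the definition of $f$ (Equation~\eqref{eq:qubit_min_size}), and then compare the result directly against $(p^2+3p-30)/2$. Since $f(k) = k + r(k)$ where the "remainder'' $r(k) \in \{1,2,3,4\}$ depends only on the residue of $k$ modulo $4$ (apart from the finitely many exceptional values $k = 4$ and $k = 8$), I would first write
\begin{align*}
	\sum_{k=4}^{p-1} f(k) = \sum_{k=4}^{p-1} k + \sum_{k=4}^{p-1} r(k) = \frac{(p-1)p}{2} - 6 + \sum_{k=4}^{p-1} r(k).
\end{align*}
The first piece is exact, so the entire problem reduces to estimating $R(p) := \sum_{k=4}^{p-1} r(k)$ and comparing it to $(p^2+3p-30)/2 - \big((p-1)p/2 - 6\big) = 2p - 9$.

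Next I would analyze $R(p)$. Ignoring the exceptional values for a moment, the generic contribution to $r(k)$ is $1$ when $k$ is odd, $2$ when $k \equiv 2 \pmod 4$, and $4$ when $k \equiv 0 \pmod 4$; over a full block of four consecutive integers this averages to $(1 + 2 + 1 + 4)/4 = 2$ per term, so $R(p)$ should be close to $2(p-4) = 2p - 8$. Comparing against the target $2p - 9$, one sees the two lower bounds differ by roughly a constant, which is exactly what the proposition claims. To make this rigorous I would partition the index range $\{4,\ldots,p-1\}$ into complete residue blocks modulo $4$ plus a partial final block, count how many indices fall into each of the four residue classes, and account for the two exceptional corrections: $f(4) = 6$ instead of the generic $4 + 4 = 8$ (a deficit of $2$), and $f(8) = 11$ instead of the generic $8 + 4 = 12$ (a deficit of $1$). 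Both exceptional terms \emph{lower} $f$ relative to the generic pattern, which is consistent with $\sum f(k)$ being the smaller of the two lower bounds.

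I would then split into the four cases $p \equiv 0,1,2,3 \pmod 4$. In each case the number of indices in $\{4,\ldots,p-1\}$ lying in each residue class is a fixed linear function of $p$, so $R(p)$ becomes an explicit expression of the form $2p + c$ for a small constant $c$ depending on the residue class of $p$ (and adjusted by the $-2$ and $-1$ from the exceptional terms, provided $p > 4$ and $p > 8$ respectively so that those terms are actually included in the range). Carrying out the bookkeeping should show $R(p) = 2p - 11 + \varepsilon(p)$ for some $\varepsilon(p) \in \{0,1,2\}$ depending on $p \bmod 4$, which translates directly into $0 \le (p^2+3p-30)/2 - \sum_{k=4}^{p-1} f(k) \le 2$, the desired double inequality.

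The main obstacle I anticipate is not conceptual but organizational: keeping the residue-counting accurate across all four cases simultaneously, and correctly handling the boundary of the summation (the index runs up to $p-1$, not $p$, so the residue of the top term is $p-1 \bmod 4$, which shifts the counts) together with the two exceptional values $k=4,8$. A clean way to sidestep some of this tedium would be to verify the claim directly for the small cases $7 \le p \le 10$ (one per residue class, ensuring both exceptional terms are past the start of the range) and then argue inductively: since $(p^2+3p-30)/2$ increases by $p+2$ when $p$ advances by $1$, and $\sum_{k=4}^{p-1} f(k)$ increases by exactly $f(p-1)$, it suffices to check that $f(p-1) \in \{p,\, p+1,\, p+2\}$ for all $p \ge 11$, i.e.\ $f(p-1) - (p+2) \in \{-2,-1,0\}$. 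This holds because $f(p-1) = (p-1) + r(p-1)$ with $r(p-1) \in \{1,2,3,4\}$ and the extreme value $r = 4$ gives $f(p-1) = p+3$ only when $p - 1 \equiv 0 \pmod 4$ and $p - 1 \ne 4, 8$; a short check of how the gap $(p^2+3p-30)/2 - \sum_{k=4}^{p-1}f(k)$ oscillates within $[0,2]$ as $p$ cycles through residues then closes the induction. I expect the inductive route to be the least error-prone and would present that, relegating the brute-force base-case checks to a single line.
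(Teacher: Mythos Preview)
Your plan is sound and would lead to a correct proof, but it contains two small slips and ends up being organizationally heavier than the paper's argument.

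First, the slips. When $p$ advances to $p+1$, the new term added to $\sum_{k=4}^{p-1}f(k)$ is $f(p)$, not $f(p-1)$; this off-by-one propagates through your inductive discussion but is easy to fix. Second, your sentence ``it suffices to check that $f(p-1)\in\{p,p+1,p+2\}$'' is not literally true: when $p-1\equiv 0\pmod 4$ and $p-1\ge 12$ you get $f(p-1)=p+3$, so the step-$1$ increment of the gap $D(p):=(p^2+3p-30)/2-\sum_{k=4}^{p-1}f(k)$ can be $-2$, and a naive step-$1$ induction could drive $D$ below $0$. You do acknowledge this and propose tracking the oscillation of $D$ through a full residue cycle, which works --- but at that point you are effectively doing induction modulo $4$ anyway.

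The paper makes exactly that move directly: it verifies the six base cases $p=7,\ldots,12$, observes that $\sum_{k=p}^{p+3}f(k)=4p+14$ for all $p\ge 9$, and notes that $(p^2+3p-30)/2$ also increases by exactly $4p+14$ when $p$ is replaced by $p+4$. This makes the inductive step a one-line identity and sidesteps both the residue-counting bookkeeping of your first route and the oscillation tracking of your second. Your approach buys a more explicit formula for the gap (indeed one finds $D(p)\in\{0,1,1,2\}$ for $p\equiv 1,2,3,0\pmod 4$ once $p\ge 9$), which is mildly more informative; the paper's approach buys brevity.
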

\begin{proof}
	We prove the result by induction. We first prove six base cases by noting that, for $p = 7, 8, 9, 10, 11, 12$ we have $\sum_{k=4}^{p-1}f(k) = 20, 28, 39, 49, 61, 73$ and $(p^2+3p-30)/2 = 20, 29, 39, 50, 62, 75$, so the result holds in these cases.
	
	For the inductive step, our goal is to show that, if the result holds for a fixed value of $p$ (with $p \geq 9$), then it holds for $p+4$ as well. To this end, note that it follows from the definition given in Equation~\eqref{eq:qubit_min_size} that $\sum_{k=p}^{p+3}f(k) = 4p + 14$ for all $p \geq 9$. Thus, by making use of the inductive hypothesis, we have
	\begin{align*}
		& \ \ \sum_{k=4}^{p-1}f(k) \leq \frac{p^2+3p-30}{2} \leq \sum_{k=4}^{p-1}f(k) + 2 \\
		\Longrightarrow & \ \ \sum_{k=4}^{p-1}f(k) + (4p+14) \leq \frac{p^2+3p-30}{2} + (4p+14) \leq \sum_{k=4}^{p-1}f(k) + 2 + (4p+14) \\
		\Longrightarrow & \ \ \sum_{k=4}^{p+3}f(k) \leq \frac{(p+4)^2+3(p+4)-30}{2} \leq \sum_{k=4}^{p+3}f(k) + 2,
	\end{align*}
	which completes the inductive step and the proof.
\end{proof}
\end{document}